\theoremstyle{definition}
\newtheorem{definition}{Definition}
\newtheorem{proposition}{Proposition}
\newtheorem{property}{Property}
\theoremstyle{plain}
\newtheorem{lemma}{Lemma}
\newtheorem{theorem}{Theorem}
\theoremstyle{remark}
\newtheorem{example}{Example}
\def\DEF{\stackrel{\Delta}=}
\def\EQDEF{\stackrel{\Delta}\Leftrightarrow}
\def\f#1{\operatorname{#1}}
\def\card#1{|#1|}
\def\powerset#1{2^{#1}}
\def\N{\mathbb N}
\def\range#1#2{[#1;#2]}
\def\indexes#1{\range 1 {\card{#1}}}
\def\concat{\!::\!}
\def\emptyseq{\varepsilon}
\def\precond#1{ {}^\bullet #1 }
\def\postcond#1{ #1 {}^\bullet}
\def\play{\cdot}
\def\Obj{\mathbf{Obj}}
\def\LS{\mathbf{LS}}
\def\csol{\f{local-paths}}
\def\rcsol#1{\f{local-paths}_{#1}}
\def\precond#1{ {}^\bullet #1 }
\def\postcond#1{ #1 {}^\bullet}
\def\get#1#2{#1(#2)}
\def\anN{\Sigma}
\def\anS{S}
\def\anT{T}
\def\antrl#1#2#3#4{ {#1}_{#2} \xrightarrow{#4} {#1}_{#3}}
\def\antr#1#2#3{{#1}_{#2} \rightarrow {#1}_{#3}}
\def\obj#1#2{{#1}\!\leadsto\!{#2}}
\def\anobj#1#2#3{\obj{{#1}_{#2}}{{#1}_{#3}}}
\def\tr{\f{tr}}
\def\RLCG{\mathcal B}
\def\TRRed{\tr(\RLCG)}
\def\dep{\f{enab}}
\def\orig{\f{orig}}
\def\dest{\f{dest}}
\def\state#1{\langle #1 \rangle}
\def\valid#1{{\mathbf{valid}}_{#1}}
\def\step{\tau}
\def\trace{\pi}
\def\traceb{\varpi}
\def\lpath{\eta}
\def\sinit{{s}}
\def\cb{\f{cb}}
\def\redact{\Psi}
\def\Pint{\texttt{Pint}}
\newcolumntype{:}{>{\global\let\currentrowstyle\relax}}
\newcolumntype{^}{>{\currentrowstyle}}
\newcommand{\rowstyle}[1]{\gdef\currentrowstyle{#1}%
  #1\ignorespaces
}
\tikzstyle{plot}=[every path/.style={-}]
\tikzstyle{axe}=[gray,->,>=stealth']
\tikzstyle{ticks}=[-,font=\scriptsize,every node/.style={gray}]
\tikzstyle{mean}=[thick,-]
\tikzstyle{interval}=[line width=5pt,red,draw opacity=0.7,-]
\tikzstyle{bounce}=[->,densely dotted,>=stealth']
\tikzstyle{selfhit}=[min distance=5mm,curve to]
\tikzstyle{hitless graph}=[every edge/.style={draw,-}]
\tikzstyle{cross}=[preaction={draw=white,-,line width=6pt,shorten >=15pt}]
\tikzstyle{cross2}=[preaction={draw=white,-,line width=3pt,shorten >=10pt}]
\tikzstyle{grn}=[every node/.style={circle,draw,outer sep=2pt,minimum
\tikzstyle{elabel}=[draw=none,sloped,above=-5pt,outer sep=0,font=\scriptsize]
\tikzstyle{inh}=[->,>=|]
\tikzstyle{act}=[->,>=latex]
\tikzstyle{piproc}=[draw,circle,minimum size=22pt]
\tikzstyle{aS}=[every edge/.style={draw,->}]
\tikzstyle{Asol}=[draw,circle,minimum size=5pt,inner sep=0]
\tikzstyle{Aproc}=[draw,minimum width=16pt,minimum height=16pt,inner sep=1pt]
\tikzstyle{Aobj}=[]
\tikzstyle{Ainh}=[-|,shorten >= 1mm]
\definecolor{lightgray}{rgb}{0.8,0.8,0.8}
\definecolor{lightgrey}{rgb}{0.8,0.8,0.8}
\tikzstyle{boxed ph}=[]
\tikzstyle{sort}=[fill=lightgray,rounded corners]
\tikzstyle{process}=[circle,draw,minimum size=15pt,fill=white,
\tikzstyle{black process}=[process, fill=black,text=white, font=\bfseries]
\tikzstyle{gray process}=[process, draw=black, fill=lightgray]
\tikzstyle{current process}=[process, draw=black, fill=lightgray]
\tikzstyle{process box}=[white,draw=black,rounded corners]
\tikzstyle{tick label}=[font=\footnotesize]
\tikzstyle{tick}=[black,-]
\tikzstyle{hit}=[->,>=angle 45]
\tikzstyle{selfhit}=[min distance=30pt,curve to]
\tikzstyle{bounce}=[densely dotted,>=stealth',->]
\tikzstyle{hl}=[font=\bfseries,very thick]
\tikzstyle{hl2}=[hl]
\tikzstyle{nohl}=[font=\normalfont,thin]
\tikzstyle{local transitions}=[->,>=latex',thick,bend left=30,
\newcommand{\currentScope}{}
\newcommand{\currentSort}{}
\newcommand{\currentSortLabel}{}
\newcommand{\currentAlign}{}
\newcommand{\currentSize}{}
\newcounter{la}
\newcommand{\TSetSortLabel}[2]{
  \expandafter\repcommand\expandafter{\csname TUserSort@#1\endcsname}{#2}
}
\newcommand{\TSort}[4]{
  \renewcommand{\currentScope}{#1}
  \renewcommand{\currentSort}{#2}
  \renewcommand{\currentSize}{#3}
  \renewcommand{\currentAlign}{#4}
  \ifcsname TUserSort@\currentSort\endcsname
    \renewcommand{\currentSortLabel}{\csname TUserSort@\currentSort\endcsname}
  \else
    \renewcommand{\currentSortLabel}{\currentSort}
  \fi
  \begin{scope}[shift={\currentScope}]
  \ifthenelse{\equal{\currentAlign}{l}}{
    \filldraw[process box] (-0.5,-0.5) rectangle (0.5,\currentSize-0.5);
    \node[sort] at (-0.2,\currentSize-0.4) {\currentSortLabel};
   }{\ifthenelse{\equal{\currentAlign}{r}}{
     \filldraw[process box] (-0.5,-0.5) rectangle (0.5,\currentSize-0.5);
     \node[sort] at (0.2,\currentSize-0.4) {\currentSortLabel};
   }{
    \filldraw[process box] (-0.5,-0.5) rectangle (\currentSize-0.5,0.5);
    \ifthenelse{\equal{\currentAlign}{t}}{
      \node[sort,anchor=east] at (-0.3,0.2) {\currentSortLabel};
    }{
      \node[sort] at (-0.6,-0.2) {\currentSortLabel};
    }
   }}
  \setcounter{la}{\currentSize}
  \addtocounter{la}{-1}
  \foreach \i in {0,...,\value{la}} {
    \TProc{\i}
  }
  \end{scope}
}
\newcommand{\TTickProc}[2]{ 
  \ifthenelse{\equal{\currentAlign}{l}}{
    \draw[tick] (-0.6,#1) -- (-0.4,#1);
    \node[tick label, anchor=east] at (-0.55,#1) {#2};
   }{\ifthenelse{\equal{\currentAlign}{r}}{
    \draw[tick] (0.6,#1) -- (0.4,#1);
    \node[tick label, anchor=west] at (0.55,#1) {#2};
   }{
    \ifthenelse{\equal{\currentAlign}{t}}{
      \draw[tick] (#1,0.6) -- (#1,0.4);
      \node[tick label, anchor=south] at (#1,0.55) {#2};
    }{
      \draw[tick] (#1,-0.6) -- (#1,-0.4);
      \node[tick label, anchor=north] at (#1,-0.55) {#2};
    }
   }}
}
\newcommand{\TSetTick}[3]{
  \expandafter\repcommand\expandafter{\csname TUserTick@#1_#2\endcsname}{#3}
}
\newcommand{\myProc}[3]{
  \ifcsname TUserTick@\currentSort_#1\endcsname
    \TTickProc{#1}{\csname TUserTick@\currentSort_#1\endcsname}
  \else
    \TTickProc{#1}{#1}
  \fi
  \ifthenelse{\equal{\currentAlign}{l}\or\equal{\currentAlign}{r}}{
    \node[#2] (\currentSort_#1) at (0,#1) {#3};
  }{
    \node[#2] (\currentSort_#1) at (#1,0) {#3};
  }
}
\newcommand{\TSetProcStyle}[2]{
  \expandafter\repcommand\expandafter{\csname TUserProcStyle@#1\endcsname}{#2}
}
\newcommand{\TProc}[1]{
  \ifcsname TUserProcStyle@\currentSort_#1\endcsname
    \myProc{#1}{\csname TUserProcStyle@\currentSort_#1\endcsname}{}
  \else
    \myProc{#1}{process}{}
  \fi
}
\newcommand{\repcommand}[2]{
  \providecommand{#1}{#2}
  \renewcommand{#1}{#2}
}
\newcommand{\THit}[5]{
  \path[hit] (#1) edge[#2] (#3#4);
  \expandafter\repcommand\expandafter{\csname TBounce@#3@#5\endcsname}{#4}
}
\newcommand{\TBounce}[4]{
  (#1\csname TBounce@#1@#3\endcsname) edge[#2] (#3#4)
}
\newcommand{\TState}[1]{
  \foreach \proc in {#1} {
    \node[current process] (\proc) at (\proc.center) {};
  }
}
\title{Goal-Oriented Reduction of Automata Networks}
\author{Lo\"ic Paulev\'e}
\date{%
LRI UMR 8623, Univ. Paris-Sud -- CNRS
\\Universit\'e Paris-Saclay, 91405 Orsay, France
\\
\texttt{loic.pauleve@lri.fr}}
\begin{document}

\maketitle

\begin{abstract}
We consider networks of finite-state machines having local transitions
conditioned by the current state of other automata.
In this paper, we introduce a reduction procedure tailored for
reachability properties of the form ``from global state $\sinit$,
there exists a sequence of transitions leading to a state where an automaton
$g$ is in a local state $\top$''.
By analysing the causality of transitions within the individual automata,
the reduction identifies local transitions which can be removed while preserving
\emph{all} the minimal traces satisfying the reachability property.
The complexity of the procedure is polynomial with the total number of local
transitions, and exponential with the maximal number of local states within an
automaton.
Applied to Boolean and multi-valued networks modelling dynamics of biological systems,
the reduction can shrink down significantly the reachable
state space, enhancing the tractability of the model-checking of large networks.
\end{abstract}

\section{Introduction}
\label{sec:introduction}

Automata networks model dynamical systems resulting from simple interactions between entities.
Each entity is typically represented by an automaton with few internal states which evolve subject
to the state of a narrow range of other entities in the network.
Richness of emerging dynamics arises from several factors including the topology of the interactions, the
presence of feedback loop, and the concurrency of transitions.

Automata networks, which subsume Boolean and multi-valued networks, are
notably used to model dynamics of biological systems, including signalling networks or gene
regulatory networks (e.g.,
\cite{Abou-Jaoude2015,Cohen2015,Grieco2013,Klamt06,tcrsig94,Sahin09,egfr104,WM13-FG}).
The resulting models can then be confronted with biological knowledge, for
instance by checking if some time series data can be reproduced by the
computational model.
In the case of models of signalling or gene regulatory networks, such data
typically refer to the possible activation of a transcription factor, or a gene,
from a particular state of the system, which reflects both the environment and
potential perturbations.
Automata networks have also been used to infer targets to control the behaviour
of the system.
For instance, in \cite{Abou-Jaoude2015,Sahin09}, the author use Boolean networks
to find combinations of signals or combinations of mutations that should alter
the cellular behaviour.

From a formal point of view, numerous biological properties can be expressed in
computation models as reachability properties:
from an initial state, or set of states, the existence of a sequence of transitions
which leads to a desired state, or set of states.
For instance, an initial state can represent a combination of signals/perturbations of a signalling network;
and the desired states the set of states where the concerned transcription factor is active.
One can then verify the (im)possibility of such an activation, possibly by
taking into account mutations, which can be modelled, for instance, as the freezing of some
automata to some fixed states, or by the removal of some transitions.

Due to the increasing precision of biological knowledge, models of networks become larger and larger
and can gather hundreds to thousands of interacting entities making the formal analysis of their
dynamics a challenging task:
the reachability problem in automata networks/bounded Petri nets is
PSPACE-complete \cite{ChengEP95}, which limits its scalability.

Facing a model too large for a raw exhaustive analysis, a natural approach is to reduce its dynamics
while preserving important properties.
Multiple approaches, often complementary, have been explored since decades
to address such a challenge in dynamical and concurrent systems
\cite{Sifakis84,Kurshan94,Bensalem95}.
In the scope of rule-based models of biological networks, efficient static analysis methods have been developed
to lump numerous global states of the systems based on the fragmentation of
interacting components \cite{Feret_IJSI2013};
and to
\emph{a posteriori} compress simulated traces to obtain compact
witnesses of dynamical properties \cite{Danos12-IARCS}.
Reductions preserving the attractors of dynamics (long-term/steady-state
behaviour) have also been proposed for chemical reaction networks \cite{Madelaine14} and Boolean networks
\cite{Naldi11-TCS}.
The latter approach applies to formalisms close to automata networks but does not
preserve reachability properties.
On Petri nets, different structural reductions have been proposed to
reduce the size of the model specification while preserving
bisimulation \cite{Schnoebelen00}, or liveness and LTL properties \cite{Ber86,HP-ppl06}.
Procedures such as the cone of influence reduction \cite{Biere99verifyingsafety}
or relevant subnet computation \cite{Talcott2006} allow to identify
variables/transitions which have no influence on a given dynamical property.
Our work has a motivation similar to the two latter approaches.

\paragraph*{Contribution}
We introduce a reduction of automata networks which identifies
transitions that do not contribute to a given reachability property and hence can be ignored.
The considered automata networks are finite sets of finite-state machines where
transitions between their local states are conditioned by the state of other
automata in the network.
We use a general concurrent semantics where any number of automata can apply one
transition within one step.
We call a \emph{trace} a sequential interleaved execution of steps.

Our reduction preserves all the minimal traces satisfying reachability properties of the form ``from state $s$ there exist
successive steps that lead to a state where a given automaton $g$ is in local state $g_\top$''.
A trace is \emph{minimal} if no step nor transition can be removed
from it and resulting in a sub-trace that satisfies the concerned reachability property.
The complexity of the procedure is polynomial in the number of local transitions,
and exponential in the maximal size of automata.
Therefore, the reduction is scalable for networks of multiple
automata, where each have a few local states.

The identification of the transitions that are not part of any minimal trace is
performed by a static analysis of the causality of transitions within automata.
It extends previous static analysis of reachability properties by
abstract interpretation \cite{PMR12-MSCS,PAK13-CAV}.
In \cite{PMR12-MSCS}, necessary or sufficient conditions for reachability are
derived, but they do not allow to capture all the (minimal) traces towards a
reachability goal.
In \cite{PAK13-CAV}, the static analysis extracts local states, referred to as
cut-sets, which are necessarily reached prior to a given reachability goal.
The results presented here are orthogonal: we identify transitions that are
never part of a minimal trace for the given reachability property.
It allows us to output a reduced model where all such transitions
are removed while preserving all the minimal traces for reachability.
Hence, whereas \cite{PAK13-CAV} focuses on identifying necessary conditions for reachability,
this article focuses on preserving sufficient conditions for reachability.

The effectiveness of our goal-oriented reduction is experimented on actual
models of biological networks and show significant shrinkage of the dynamics of
the automata networks, enhancing the tractability of a concrete verification.
Compared to other model reductions, our goal is similar to
the cone of influence reduction \cite{Biere99verifyingsafety} or
relevant subnet computation \cite{Talcott2006}
mentioned above, which identify variables/transitions that do not impact a given property.
Here, our approach offers a much more fine-grained analysis in
order to identify the sufficient transitions and values of variables that contribute to
the property, which leads to stronger reductions.

\paragraph*{Outline}
\Secref{definitions} sets up the definition and semantics of the automata
networks considered in this paper, together with the local causality analysis
for reachability properties, based on prior work.
\Secref{reduction} first depicts a necessary condition using local causality
analysis for satisfying a reachability property
and then introduce the goal-oriented reduction with the proof of minimal traces
preservation.
\Secref{experiments} shows the efficiency of the reduction on a range of
biological networks.
Finally, \secref{discussion} discusses the results and motivates further work.

\paragraph*{Notations}
Integer ranges are noted $[m;n] \DEF \{m, m+1, \cdots, n\}$.
Given a finite set $A$, $\card A$ is the cardinality of $A$;
$\powerset A$ is the power set of $A$.
Given $n\in \N$,
$x = (x^i)_{i \in [1;n]}$ is a sequence of elements indexed by $i \in [1;n]$;
$\card x = n$;
$x^{m..n}$ is the subsequence $(x^i)_{i\in[m;n]}$;
$x\concat e$ is the sequence $x$ with an additional element $e$ at the end;
$\emptyseq$ is the empty sequence.


\section{Automata Networks and Local Causality}
\label{sec:definitions}


\subsection{Automata Networks}
\seclabel{an}

We declare an Automata Network (AN) with a finite set of finite-state machines
having transitions between their local states conditioned by the state of other
automata in the network.
An AN is defined by a triple $(\anN,\anS,\anT)$ (\defref{cfsm}) where
$\anN$ is the set of automata identifiers;
$\anS$  associates to each automaton a finite set of local states:
if $a\in\anN$, $\anS(a)$ refers to the set of local states of $a$;
and $\anT$  associates to each automaton its local transitions.
Each local state is written of the form $a_i$, where $a\in\anN$ is the automaton in
which the state belongs to, and $i$ is a unique identifier;
therefore given $a_i,a_j\in\anS(a)$, $a_i=a_j$ if and only if $a_i$ and $a_j$
refer to the same local state of the automaton $a$.
For each automaton $a\in\anN$, $\anT(a)$ refers to the set of transitions of
the form $t=\antrl aij\ell$ with $a_i,a_j\in\anS(a)$, $a_i\neq a_j$, and $\ell$
the enabling condition of $t$, formed by a (possibly empty) set of local states
of automata different than $a$ and containing at most one local state of each
automaton.
The \emph{pre-condition} of transition $t$, noted $\precond t$, is the set composed
of $a_i$ and of the local states in $\ell$;
the \emph{post-condition}, noted $\postcond t$ is the set composed of $a_j$
and of the local states in $\ell$.

\begin{definition}[Automata Network $(\Sigma,S,T)$]
\label{def:cfsm}
An \emph{Automata Network} (AN) is defined by a tuple $(\Sigma,S,T)$ where
\begin{itemize}
\item $\Sigma$ is the finite set of automata identifiers;
\item For each $a\in\Sigma$, $S(a) = \{a_i,\dots,a_j\}$ is the finite set of local states of automaton $a$;
$S \DEF \prod_{a\in\Sigma} S(a)$ is the finite set of global states;\\
$\LS \DEF \bigcup_{a\in\Sigma} S(a)$ denotes the set of all the local states.
\item $T = \{ a \mapsto T_a \mid a\in \Sigma \}$, where $\forall a\in\Sigma,
	T_a \subseteq S(a)\times\powerset{\LS\setminus S(a)} \times S(a)$
	with $(a_i,\ell,a_j)\in T_a \Rightarrow a_i\neq a_j$
	and $\forall b\in\Sigma, \card{\ell\cap\anS(b)} \leq 1$,
	is the mapping from automata to their finite set of local transitions.
\end{itemize}

\noindent
We note $a_i\xrightarrow \ell a_j\in T \EQDEF (a_i,\ell,a_j)\in T(a)$
and
$\antr aij\in T \EQDEF \exists\ell\in\powerset{\LS\setminus S(a)}, \antrl
aij\ell\in T$.
Given $t = a_i\xrightarrow\ell a_j\in T$,
$\orig(t)\DEF a_i$,
$\dest(t)\DEF a_j$,
$\dep(t)\DEF \ell$,
$\precond t\DEF \{a_i\}\cup\ell$, and
$\postcond t\DEF \{a_j\}\cup\ell$.
\end{definition}

At any time, each automaton is in one and only one local state, forming the
global state of the network.
Assuming an arbitrary ordering between automata identifiers, the set of global
states of the network is referred to as $\anS$ as a shortcut for $\prod_{a\in\anN}\anS(a)$.
Given a global state $s\in\anS$, $\get sa$ is the local state of automaton $a$
in $s$, i.e., the $a$-th coordinate of $s$.
Moreover we write $a_i\in s \EQDEF \get sa=a_i$; and for any
$ls\in\powerset{\LS}$, $ls\subseteq s\EQDEF \forall a_i\in ls,\get sa=a_i$.

In the scope of this paper, we allow, but do not enforce, the parallel application of transitions in
different automata.
This leads to the definition of a \emph{step} as a set of transitions, with at
most one transition per automaton (\defref{step}).
For notational convenience, we allow empty steps.
The pre-condition (resp. post-condition) of a step $\tau$, noted $\precond \step$
(resp. $\postcond \step$),
extends the similar notions on transitions:
the pre-condition (resp. post-condition) is the union of the pre-conditions
(resp. post-conditions) of composing transitions.
A step $\step$ is \emph{playable} in a state $s\in\anS$ if and only if
$\precond \step \subseteq s$, i.e., all the local states in the pre-conditions of
transitions are in $s$.
If $\step$ is playable in $s$, $s\play\step$ denotes the state after the
applications of all the transitions in $\step$, i.e., where for each transition
$\antrl aij\ell\in\tau$, the local state of automaton $a$ has been replaced with
$a_j$.
\begin{definition}[Step]
\deflabel{step}
Given an AN $(\Sigma,S,T)$, a \emph{step} $\step$ is a subset of local transitions $T$
such that for each automaton $a\in\Sigma$, there is at most one local transition $T(a)$ in $\step$
($\forall a\in\Sigma, \card{(\step\cap T(a))}\leq 1$).

\noindent
We note
$\precond\step\DEF \bigcup_{t\in\step}\precond t$
and
$\postcond\step\DEF \bigcup_{t\in\step}\postcond t \setminus
\{\orig(t)\mid t\in\step\}$.

\noindent
Given a state $s\in S$
where $\step$ is playable ($\precond\step\subseteq s$),
$s\play\step$ denotes the state where
$\forall a\in\anN$,
$\get{(s\play\step)}a=a_j$ if $\exists \antr aij\in\step$,
and
$\get{(s\play\step)}a=\get sa$ otherwise.
\end{definition}

Remark that $\postcond\step \subseteq s\play\step$ and that this definition
implicitly rules out steps composed of incompatible transitions, i.e., where different
local states of a same automaton are in the pre-condition.

A \emph{trace} (\defref{trace}) is a sequence of successively playable steps from a state $s\in\anS$.
The pre-condition $\precond\trace$ of a trace $\trace$ is the set of local
states that are required to be in $s$ for applying $\trace$
($\precond\trace\subseteq s$);
and the post-condition $\postcond\trace$ is the set of local states that are
present in the state after the full application of $\trace$
($\postcond\trace\subseteq s\play\trace$).

\begin{definition}[Trace]
\deflabel{trace}
Given an AN $(\Sigma,S,T)$ and a state $s\in S$,
a \emph{trace} $\pi$ is a sequence of steps such that $\forall i\in\indexes\pi$,
$\precond{\pi^i}\subseteq (s\play\pi^1\play\cdots\pi^{i-1})$.

\noindent
The pre-condition $\precond\trace$ and
the post-condition $\postcond\trace$ are
defined as follows:
for all $n\in\indexes\trace$,
for all $a_i\in\precond\pi^n$,
$a_i\in\precond\pi \EQDEF
\forall m\in\range 1{n-1},
\anS(a)\cap\precond\pi^m = \emptyset$;
similarly,
for all $n\in\indexes\trace$,
for all $a_j\in\postcond{\pi^n}$,
$a_j\in\postcond\trace \EQDEF
\forall m\in\range {n+1}m,
\anS(a)\cap\postcond{\pi^m} = \emptyset$.
If $\trace$ is empty,
$\precond\trace=\postcond\trace=\emptyset$.

\noindent
The set of transitions composing a trace $\trace$ is noted
$\tr(\trace)\DEF\bigcup_{n=1}^{\card \trace}{\trace^n}$.
\end{definition}

Given an automata network $(\anN,\anS,\anT)$ and a state $s\in S$,
the local state $g_\top\in \LS$ is \emph{reachable} from $s$
if and only if
either $g_\top\in s$ or
there exists a trace $\pi$ with
$\precond\pi\subseteq s$ and
$g_\top\in\postcond\pi$.

We consider a trace $\trace$ for $g_\top$ reachability from $s$ is
\emph{minimal}
if and only if there exists no different trace reaching $g_\top$
having each successive step being a subset of a step in $\trace$ with the same
ordering (\defref{minimal}).
Say differently, a trace is minimal for $g_\top$ reachability if no step or
transition can be removed from it without breaking the trace validity or
$g_\top$ reachability.

\begin{definition}[Minimal trace for local state reachability]\deflabel{minimal}
A trace $\pi$ is \emph{minimal} w.r.t. $g_\top$ reachability from $s$
if and only if
there is no trace $\varpi$ from $s$, $\varpi\neq\pi$, $\card\varpi\leq\card\pi$, $g_\top\in\postcond\varpi$,
such that
there exists an injection $\phi: \indexes\varpi\to\indexes\pi$
with $\forall i,j\in\indexes\varpi$,
$i < j \Leftrightarrow \phi(i)<\phi(j)$
and
$\varpi^i \subseteq \pi^{\phi(i)}$.
\end{definition}

Automata networks as presented can be considered as a class of $1$-safe Petri
Nets \cite{BC92} (at most one token per place) having groups of mutually
exclusive places, acting as the automata, and where each transition has one and
only one incoming and out-going arc and any number of read arcs.
The semantics considered in this paper where transitions within different automata
can be applied simultaneously echoes with Petri net step-semantics and
concurrent/maximally concurrent semantics \cite{Janicki86,Priese98,Janicki2015}.
In the Boolean network community, such a semantics is referred to as the
asynchronous generalized update schedule \cite{Aracena09}.



\subsection{Local Causality}
\seclabel{local-causality}

Locally reasoning within one automaton $a$, the reachability of one of its local
state $a_j$ from some global state $s$ with $\get sa=a_i$ can be described by
a (local) \emph{objective},
that we note $\obj{a_i}{a_j}$
(\defref{objective}).

\begin{definition}[Objective]
\label{def:objective}
Given an automata network $(\anN,\anS,\anT$),
an \emph{objective} is a pair of local states $a_i,a_j\in\anS(a)$ of a same automaton
$a\in\anN$ and is denoted $\obj{a_i}{a_j}$.
The set of all objectives is referred to as
$\Obj\DEF\{ \obj{a_i}{a_j} \mid (a_i,a_j)\in \anS(a)\times\anS(a), a\in\anN \}$.
\end{definition}

Given an objective $\anobj aij\in\Obj$, $\csol(\anobj aij)$ is the set of
local acyclic paths of transitions $\anT(a)$ within automaton $a$ from $a_i$ to $a_j$ (\defref{csol}).

\begin{definition}[$\csol$]\deflabel{csol}
Given $\anobj aij\in\Obj$,
if $i=j$, \penalty 0$\csol(\anobj aii)\DEF\{\emptyseq\}$;
if $i\neq j$,
a sequence $\lpath$ of transitions in $T(a)$
is in $\csol(\anobj aij)$ if and only if
$\card\lpath\geq 1$,
$\orig(\lpath^1) = a_i$,
$\dest(\lpath^{\card\lpath})=a_j$,
$\forall n\in\range 1 {\card\lpath-1}$,
$\dest(\lpath^n) = \orig(\lpath^{n+1})$,
and
$\forall n,m\in \indexes\lpath,
n>m\Rightarrow
\dest(\lpath^n)\neq\orig(\lpath^m)$.
\end{definition}

As stated by \ptyref{csol}, any trace reaching $a_j$ from a state containing $a_i$ uses all the transitions
of at least one local acyclic path in $\csol(\anobj aij)$.

\begin{property}
For any trace $\trace$,
for any $a\in\anN$, $a_i,a_j\in\anS(a)$,
$1\leq n\leq m\leq\card\trace$
where
$a_i\in\precond{\trace^n}$
and
$a_j\in\postcond{\trace^m}$,
there exists a local acyclic path $\lpath\in\csol(\anobj aij)$
that is a sub-sequence of $\trace^{n..m}$, i.e.,
there is an injection $\phi:\indexes\lpath \to \range nm$ with
$\forall u,v\in\indexes\lpath,
u<v \Leftrightarrow \phi(u)<\phi(v)$
and
$\lpath^u \in\trace^{\phi(u)}$.
\ptylabel{csol}
\end{property}

A local path is not necessarily a trace, as transitions may be conditioned by
the state of other automata that may need to be reached beforehand.
A local acyclic path being of length at most $\card{\anS(a)}$ with unique
transitions, the number of local acyclic paths is polynomial in the number of transitions
$\anT(a)$ and exponential in the number of local states in $a$.

\begin{example}
Let us consider the automata network $(\anN,\anS,\anT)$, graphically represented
in \figref{example1}, where:
\begin{align*}
\anN & = \{a,b,c,d\}
\\
\anS(a) &= \{a_0, a_1\} &
\anT(a) &= \{ \antrl a01{\{b_0\}}, \antrl a10{\emptyset}\}
\\
\anS(b) &= \{b_0, b_1\} &
\anT(b) &= \{ \antrl b01{\{a_1\}}, \antrl b10{\{a_0\}} \}
\\
\anS(c) &= \{c_0, c_1, c_2\} &
\anT(c) &= \{ \antrl c01{\{a_1\}}, \antrl c10{\{b_1\}},
				\antrl c12{\{b_0\}}, \antrl c02{\{d_1\}}\}
\\
\anS(d) &= \{d_0, d_1\} &
\anT(d) &= \emptyset \phantom{\antrl d00{\{b_0\}}}
\end{align*}
The local paths for the objective $\anobj c02$ are
$\csol(\anobj c02)=\{ \antrl c01{\{a_1\}} \xrightarrow{\{b_0\}} c_2, \antrl c02{\{d_1\}}\}$.
From the state $\state{a_0,b_0,c_0,d_0}$, instances of traces are\\
$
\begin{aligned}
&\{\antrl a01{\{b_0\}}\}\concat
	\{\antrl b01{\{a_1\}},\antrl c01{\{a_1\}}\}\concat
	\{\antrl a10{\emptyset}\}\concat
	\{\antrl b10{\{a_0\}}\}\concat
	\{ \antrl c12{\{b_0\}}\} \enspace;\\
&\{\antrl a01{\{b_0\}}\}\concat\{\antrl c01{\{a_1\}}\}\concat\{ \antrl c12{\{b_0\}}\}\enspace;
\end{aligned}
$\\
the latter only being a minimal trace for $c_2$ reachability.

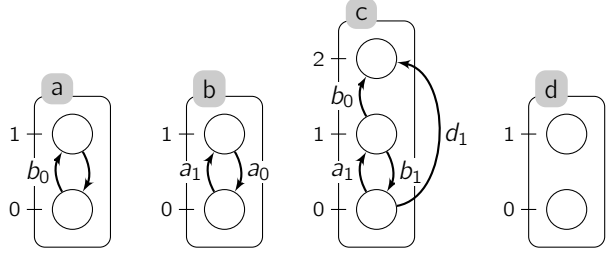
\begin{figure}[t]
\centering
\begin{tikzpicture}
\TSort{(0,0)}{a}{2}{l}
\TSort{(2,0)}{b}{2}{l}
\TSort{(4,0)}{c}{3}{l}
\TSort{(6.5,0)}{d}{2}{l}

\path[local transitions]
  (a_0) edge node[auto] {$b_0$} (a_1)
  (a_1) edge (a_0)
  (c_0) edge node[auto] {$a_1$} (c_1)
  (c_1) edge node[auto] {$b_0$} (c_2)
  (c_1) edge node[auto] {$b_1$} (c_0)
  (c_0) edge[bend right=80] node[right] {$d_1$} (c_2)
  (b_0) edge node[auto] {$a_1$} (b_1)
  (b_1) edge node[auto] {$a_0$} (b_0)
;

\end{tikzpicture}
\caption{An example of automata network.
Automata are represented by labelled boxes, and local states by circles
where ticks are their identifier within the automaton -- for instance, the local
state $a_0$ is the circle ticked 0 in the box $a$.
A transition is a directed edge between two local states within the same
automaton.
It can be labelled with a set of local states of other automata.
In this example, all the transitions are conditioned by at most one other local state.
\label{fig:example1}}
\end{figure}

\end{example}

\section{Goal-Oriented Reduction}
\label{sec:reduction}

Assuming a global AN $(\anN,\anS,\anT)$, an initial state
$\sinit\in\anS$ and a reachability goal $g_\top$ where $g\in\anN$ and
$g_\top\in\anS(g)$, the goal-oriented reduction identifies a subset of local
transitions $\anT$ that are sufficient for producing all the minimal
traces leading to $g_\top$ from $\sinit$.
The reduction procedure takes advantage of the local causality analysis both to
fetch the transitions that matter for the reachability goal and to filter out
objectives that can be statically proven impossible.

\subsection{Necessary condition for local reachability}
\seclabel{gored-filter}

Given an objective $\anobj aij$ and a global state $s\in\anS$ where
$\get sa=a_i$, prior work has demonstrated necessary conditions
for the existence of a trace leading to $a_j$ from $s$ \cite{PMR12-MSCS,PAK13-CAV}.
Those necessary conditions rely on the local causality analysis defined in
previous section for extracting necessary steps that have to be performed in order
to reach the concerned local state.

Several necessary conditions have been established in \cite{PMR12-MSCS}, taking
into account several features captured by the local paths (dependencies,
sequentiality, partial order constraints, \ldots).
The complexity of deciding most of these necessary conditions is polynomial in
the total number of local transitions and exponential in the maximum number of local states
within an automaton.

In this section, we consider a generic reachability over-approximation predicate
$\valid{\sinit}$ which is false only when applied to an objective that
has no trace concretizing it from $\sinit$:
$a_j$ is reachable from $s$ with $\get sa=a_i$ only if $\valid{\sinit}(\anobj aij)$.

\begin{definition}[$\valid{\sinit}$]
Given any objective $\anobj aij\in\Obj$,
$\valid{\sinit}(\anobj aij)$ if there exists a trace $\trace$ from $\sinit$ such that
$\exists m,n\in\indexes\trace$ with
$m\leq n$, $a_i\in\precond{\trace^m}$, and $a_j\in\postcond{\trace^n}$.
\deflabel{valid}
\end{definition}

\def\ValidSet{\Omega}

For the sake of self-consistency, we give in \pporef{oa} an instance
implementation of such a predicate.
It is a simplified version of a necessary condition for reachability demonstrated in \cite{PMR12-MSCS}.
Essentially, the set of valid objectives $\ValidSet$ is built as follows:
initially, it contains all the objectives of the form $\anobj aii$ (that are always valid);
then an objective $\anobj aij$ is added to $\ValidSet$ only if there exists
a local acyclic path $\lpath\in\csol(\anobj aij)$
where all the objectives from the initial state $s$ to the enabling conditions
of the transitions are already in $\ValidSet$:
if $b_k\in\dep(\lpath^n)$ for some $n\in\indexes\lpath$,
then the objective $\anobj b0k$ is already in the set, assuming $\get sb=b_0$.

\def\ite{\f{F}}
\begin{proposition}
\ppolabel{oa}
For all objective $P\in\Obj$,
$\valid{\sinit}(P) \EQDEF P\in\ValidSet$
where
$\ValidSet$ is the least fixed point of the monotonic function
$\ite:\powerset{\Obj}\to\powerset{\Obj}$ with
\begin{align*}
\ite(\Omega) \DEF
\{ \anobj aij\in\Obj&\mid
\exists \lpath\in\csol(\anobj aij): \\
&
\qquad
\forall n\in\indexes\lpath,
\forall b_k\in\dep(\lpath^n),
	\obj{\get{\sinit}b}{b_k}\in\ValidSet \}
\enspace.
\end{align*}
\end{proposition}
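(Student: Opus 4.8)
The statement is an instance-validation, and the plan is to check two things: (i) that $\ite$ is monotone --- whence $\ValidSet=\mathrm{lfp}(\ite)$ is well defined by Knaster--Tarski and satisfies the associated fixed-point identity --- and (ii) that the predicate $P\mapsto(P\in\ValidSet)$ meets the specification demanded before \defref{valid}, namely that every objective concretised by a trace from $\sinit$ belongs to $\ValidSet$. Point (i) is immediate: if $\Omega\subseteq\Omega'$, any local path $\lpath$ witnessing $\obj{a_i}{a_j}\in\ite(\Omega)$ still witnesses $\obj{a_i}{a_j}\in\ite(\Omega')$. The only consequence of (i) used below is the fixed-point identity: $\obj{a_i}{a_j}\in\ValidSet$ as soon as some $\lpath\in\csol(\obj{a_i}{a_j})$ has, for every enabling local state $b_k$ of its transitions, the sub-objective $\obj{\get{\sinit}b}{b_k}$ already in $\ValidSet$. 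All the work is in (ii).

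For (ii), the plan is to prove the following by induction on $n$: for every objective $\obj{a_i}{a_j}$, every trace $\trace$ from $\sinit$, and all indices $m\le n$ with $a_i\in\precond{\trace^m}$ and $a_j\in\postcond{\trace^n}$, one has $\obj{a_i}{a_j}\in\ValidSet$. If $a_i=a_j$ this holds because $\emptyseq\in\csol(\obj{a_i}{a_i})$ makes the fixed-point condition vacuously true, so assume $a_i\neq a_j$. Applying \ptyref{csol} to $\trace^{m..n}$ yields a local acyclic path $\lpath\in\csol(\obj{a_i}{a_j})$ embedded in $\trace^{m..n}$ via an order-preserving injection $\phi$ with $\lpath^u\in\trace^{\phi(u)}$ and $m\le\phi(u)\le n$. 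By the fixed-point identity it then suffices to show that for this particular $\lpath$, every $b_k\in\dep(\lpath^u)$ satisfies $\obj{\get{\sinit}b}{b_k}\in\ValidSet$.

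Fix such a $b_k$. From $b_k\in\precond{\lpath^u}\subseteq\precond{\trace^{\phi(u)}}$ together with playability of $\trace^{\phi(u)}$ in the state it starts from, automaton $b$ is in state $b_k$ just before step $\phi(u)$. If $\get{\sinit}b=b_k$ the sub-objective is reflexive and we are done. Otherwise $b$ fires at least one transition during steps $1,\dots,\phi(u)-1$; let $p$ be the index of the first transition of $b$ in $\trace$ and $q$ that of the last transition of $b$ strictly before $\phi(u)$. Tracking when the $b$-coordinate can change: $b$ is still in $\get{\sinit}b$ before step $p$, so the transition fired there has origin $\get{\sinit}b$, whence $\get{\sinit}b\in\precond{\trace^p}$; $b$ remains in $b_k$ from after step $q$ up to step $\phi(u)$, so the transition fired at step $q$ has destination $b_k$, and since a step holds at most one transition per automaton this destination is not cancelled, so $b_k\in\postcond{\trace^q}$; lastly $p\le q\le\phi(u)-1<n$. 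Hence $\trace$ with the indices $p\le q$ concretises $\obj{\get{\sinit}b}{b_k}$ with last index $q<n$, and the induction hypothesis gives $\obj{\get{\sinit}b}{b_k}\in\ValidSet$. Feeding these memberships into the fixed-point condition gives $\obj{a_i}{a_j}\in\ValidSet$. The base case $n=1$ is subsumed, since then $\phi(u)=1$ forces every enabling local state of $\lpath$ to lie in $\sinit$, leaving only the reflexive sub-case.

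The delicate part is this trace bookkeeping: making rigorous that $b_k$ is present in the state just before step $\phi(u)$ (by propagating $\precond{\cdot}$ along the steps), that the extremal $b$-transitions at $p$ and $q$ carry exactly the origin $\get{\sinit}b$ and destination $b_k$ needed to certify the sub-objective, and --- crucial for well-foundedness of the induction --- that $q$ is \emph{strictly} below $n$. A small but genuine subtlety is that $\ite$ requires only a \emph{single} certifying local path, which is exactly what \ptyref{csol} supplies, so one should not try to reason about all local paths of the objective simultaneously. Monotonicity, the reflexive cases, and the closing use of the fixed-point condition are routine.
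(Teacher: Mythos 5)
Your proof is correct, but there is nothing in the paper to compare it against: \pporef{oa} is stated without proof, being presented as a simplified instance of a necessary condition established in \cite{PMR12-MSCS}. What you supply is therefore a self-contained argument for the only thing that actually needs proving, namely soundness of the over-approximation --- every objective concretised by some trace from $\sinit$ belongs to $\ValidSet$; the converse is deliberately not required by \defref{valid}. The two load-bearing points are exactly the ones you isolate: \ptyref{csol} furnishes a \emph{single} certifying local acyclic path embedded in $\trace^{m..n}$, and the bookkeeping on the first and last $b$-transitions occurring strictly before step $\phi(u)$ yields a concretisation of each enabling sub-objective $\obj{\get{\sinit}b}{b_k}$ whose last index satisfies $q\leq\phi(u)-1<n$, which is what makes the strong induction on $n$ well-founded; the fixed-point identity $\ite(\ValidSet)=\ValidSet$ then closes the argument. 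Two minor remarks. First, you implicitly (and correctly) read the occurrence of $\ValidSet$ inside the displayed definition of $\ite(\Omega)$ as $\Omega$; taken literally the displayed formula makes $\ite$ constant and its monotonicity vacuous, so this is a typo in the statement rather than a flaw in your proof. Second, your conclusion $b_k\in\postcond{\trace^q}$ silently relies on $\orig(t)\neq\dest(t)$ and on enabling conditions never containing local states of the transition's own automaton, both guaranteed by \defref{cfsm}, so the ``destination is not cancelled'' step is sound as written.
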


Applied to the AN of \figref{example1},
if $\sinit = \langle a_0,b_0,c_0,d_0\rangle$,
$\valid{\sinit}(\anobj c02)$ is true because
$\antrl c01{a_1}\xrightarrow{b_0} c_2\in\csol(\anobj c02)$
with
$\valid{\sinit}(\anobj a01)$ true
and
$\valid{\sinit}(\anobj b00)$ true.
On the other hand,
$\valid{\sinit}(\anobj d01)$ is false.

\medskip
Note that \Pporef{oa} is an instance of $\valid{\sinit}$ implementation; any other implementation satisfying
\defref{valid} can be used to apply the reduction proposed in this article.
In \cite{PMR12-MSCS}, more restrictive over-approximations are proposed.

\subsection{Reduction procedure}
\seclabel{gored-proc}

This section depicts the goal-oriented reduction procedure which
aims at identifying transitions that do not take part in any minimal trace
from the given initial state to the goal local state $g_\top$.
The reduction relies on the local causality analysis
to delimit local paths that may be involved in the goal reachability:
any local transitions that is not captured by this analysis can be removed from the model without
affecting the minimal traces for its occurrence.

The reduction procedure (\defref{rlcg}) consists of collecting a set $\RLCG$ of objectives whose local acyclic paths
may contribute to a minimal trace for the goal reachability.
To ease notations, and without loss of generality, we assume that any
automaton $a$ is in state $a_0$ in $\sinit$.
Given an objective, only the local paths where all the enabling conditions lead to valid objectives
are considered ($\rcsol\sinit$).
The local transitions corresponding to the objectives in $\RLCG$ are noted $\TRRed$.

Initially starting with the main objective $\anobj g0\top$ (\defref{rlcg}(1)), the procedure
iteratively collects objectives that may be involved for the enabling conditions
of local paths of already collected objectives.
If a transition $\antrl bjk\ell$ is in $\TRRed$, for each $a_i\in\ell$, the objective $\anobj a0i$
is added in $\RLCG$ (\defref{rlcg}(2));
and for each other objective $\anobj b\star i\in\RLCG$,
the objective $\anobj bki$ is added in $\RLCG$ (\defref{rlcg}(3)).
Whereas the former criteria references the objectives required for concretizing a local path from the
initial state,
the later criteria accounts for the possible interleaving and successions of
local paths within a same automaton: e.g., $g_\top$ reachability may require to reach
$b_k$ and $b_i$ in some (undefined) order, we then consider 4 objectives: $\anobj b0k$, 
$\anobj bki$,
$\anobj b0i$, 
and $\anobj bik$.

\begin{definition}[$\RLCG$]~
Given an AN $(\anN,\anS,\anT)$,
an initial state $\sinit$ where, without loss of generality, $\forall a\in\anN$, $\get{\sinit}{a}=a_0$,
and a local state $g_\top$ with $g\in\anN$ and $g_\top\in\anS(g)$,
$\RLCG\subseteq\Obj$ is the smallest set which satisfies the following conditions:
\begin{enumerate}
\item $\obj{g_0}{g_\top}\in\RLCG$
\item $\antrl{b}{j}{k}{\ell}\in\TRRed \Rightarrow
	\forall a_i\in\ell, \obj{a_0}{a_i}\in\RLCG$
\item $\antrl{b}{j}{k}{\ell}\in\TRRed
\wedge
	\obj{b_\star}{b_i}\in\RLCG
\Rightarrow
		\obj{b_k}{b_i}\in\RLCG$
\end{enumerate}
\begin{align*}
\text{with}\qquad
\TRRed & \DEF \bigcup_{P\in\RLCG} \tr(\rcsol{\sinit}(P))\enspace,
\text{ where, $\forall P\in\Obj$,}
\\
\rcsol\sinit(P)  & \DEF \{\lpath\in\csol(P)\mid \forall n\in\indexes\lpath,
\\&\qquad\qquad
\forall b_k\in\dep(\lpath^n), \valid{\sinit}(\anobj b0k)\}\enspace,
\label{eq:rcsol}
\end{align*}
$\dep(t)$ being the enabling condition of local transition $t$ (\defref{cfsm}).
\deflabel{rlcg}
\end{definition}

%

\Thmref{gored} states that any trace which is minimal for the reachability of $g_\top$ from initial state
$\sinit$ is composed only of transitions in $\TRRed$.
The proof is given in \supplref{gored-proof}.
It results that the AN $(\anN,\anS,\TRRed)$ contains less transitions
but preserves all the minimal traces for the reachability of the goal.

\begin{theorem}
For each \emph{minimal} trace $\pi$ reaching $g_\top$ from $\sinit$,
$\tr(\pi)\subseteq\TRRed$.
\label{thm:gored}
\end{theorem}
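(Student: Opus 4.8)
\emph{Plan.} The idea is to take any minimal trace $\pi$ apart automaton by automaton and to match the local fragments thus obtained against the three closure rules of \defref{rlcg}. If $g_\top = g_0$ the only minimal trace is $\emptyseq$ and the claim is trivial, so assume $g_\top \neq g_0$ and $\pi \neq \emptyseq$. For each automaton $a$, let $\rho_a = (t_a^1,\dots,t_a^{n_a})$ be the subsequence of transitions of $a$ occurring along $\pi$, listed by (strictly increasing) step index; it describes a walk $a_0 = v_a^0 \to v_a^1 \to \cdots \to v_a^{n_a}$ in the local transition graph of $a$. Every transition of $\pi$ occurs in exactly one such $\rho_a$, so it suffices to prove that every transition of every $\rho_a$ lies in $\TRRed$.

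\emph{Local acyclic structure.} First I would extract some structural facts from minimality. The last step of $\pi$ is a singleton $\{\antrl{g}{i}{\top}{\ell}\}$: $g$ must move to $g_\top$ in the last step (otherwise a strict prefix of $\pi$ already reaches $g_\top$), and any further transition in that step could be dropped. For the same reason $g$ is in state $g_\top$ only after this last step. Call a position $k$ of $\rho_a$ a \emph{pivot} if $k\in\{0,n_a\}$ or the local state $v_a^k$ appears in the enabling condition of some transition occurring in $\pi$ while $a$ sits in $v_a^k$; by minimality, for $a\neq g$ the position $n_a$ is in fact a pivot of the second kind (else $t_a^{n_a}$ would be removable). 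I claim that between two consecutive pivots the walk $\rho_a$ never revisits a local state: a revisit would be an inter-pivot cycle that can be excised from $\pi$, yielding a strictly smaller sequence that is still a trace -- the excised states are non-pivot, hence demanded by no enabling condition, and for $a=g$ the excised part avoids $g_\top$ by the remark above -- and still reaching $g_\top$, contradicting \defref{minimal}. Hence each fragment $\sigma$ of $\rho_a$ running between consecutive pivots $v_a^k,v_a^{k'}$ is a local acyclic path, $\sigma\in\csol(\obj{v_a^k}{v_a^{k'}})$; and since every transition of $\sigma$ is actually played in $\pi$, each local state occurring in each of its enabling conditions is reached from $\sinit$ by a prefix of $\pi$, hence names a valid objective (the over-approximation property, \defref{valid}), so in fact $\sigma\in\rcsol{\sinit}(\obj{v_a^k}{v_a^{k'}})$.

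\emph{Core induction.} It then remains to show that the objective labelling each such fragment belongs to $\RLCG$, for then $\sigma\in\rcsol{\sinit}(\obj{v_a^k}{v_a^{k'}})$ puts all of $\sigma$ into $\TRRed$. I would prove, by a well-founded induction on the causal structure of $\pi$, the two statements: (i) $\obj{a_0}{v}\in\RLCG$ for every pivot $v\neq a_0$ of every $\rho_a$; (ii) $\obj{u}{v}\in\RLCG$ for consecutive pivots $u,v$ of every $\rho_a$. The derivations mirror the three rules: $\obj{g_0}{g_\top}\in\RLCG$ is rule~(1); for (i), the pivot $v$ is demanded by some transition $t'$ of $\pi$, which lies in a fragment of its own automaton labelled by a $\RLCG$-objective of strictly lower rank (by the induction), hence $t'\in\TRRed$, and rule~(2) gives $\obj{a_0}{v}\in\RLCG$; for (ii), the last transition of the fragment preceding $u$ lies in $\TRRed$ (by (ii) for that fragment, or directly when $u=a_0$), and with $\obj{a_0}{v}\in\RLCG$ from (i) (or $\obj{g_0}{g_\top}$ when $v=g_\top$), rule~(3) gives $\obj{u}{v}\in\RLCG$. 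The rank that makes this well founded is built on the observation that a pivot is demanded at a step strictly later than the one at which it is reached, and that following the ``is demanded by'' relation backwards always terminates at the last step of $\pi$, where $g_\top$ is produced and the recursion reaches the root; a secondary component recording a fragment's position within its automaton's walk handles the within-automaton chaining of rule~(3). Combining with the previous paragraph, every transition of every $\rho_a$ lies in $\TRRed$, which is the claim.

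\emph{Main obstacle.} The two load-bearing points are, first, the loop-excision step: one has to check carefully that removing an inter-pivot cycle of a single automaton from $\pi$ -- which shrinks some steps and may empty others -- still produces a \emph{trace} (every step playable after step re-indexing), still reaching $g_\top$, and a legitimate competitor in the sense of \defref{minimal}; the verification that no enabling condition used elsewhere in $\pi$ is destroyed is precisely where ``non-pivot'' is used. Second, the core induction couples the two recursions ``$P\in\RLCG$'' and ``$t\in\TRRed$'', which run across automata through enabling conditions and rule~(2) and within an automaton through rule~(3); exhibiting a genuinely well-founded rank -- which works because $\pi$ is finite and reaches $g_\top$ exactly once, at its end, so the ``is demanded by'' relation is acyclic -- is the delicate part.
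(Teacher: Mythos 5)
Your overall strategy is genuinely different from the paper's: you argue directly, decomposing $\pi$ into per-automaton walks, excising inter-pivot cycles by minimality, and then deriving membership in $\RLCG$/$\TRRed$ for the fragment objectives by induction; the paper instead argues by contradiction, locating the \emph{last} step containing a transition outside $\TRRed$, showing it must sit inside a local loop (its Lemmas on $\cb$), and excising that loop together with everything causally downstream to exhibit a strict sub-trace. Your loop-excision step and the observation that realized enabling conditions yield $\valid\sinit$ objectives are sound and closely parallel the paper's cycle lemmas. The gap is in the core induction: the rank you propose is not well founded, because the derivation of statement (ii) for a fragment $[u,w]$ passes through statement (i) for the \emph{earlier} pivot $u$ (you need the transition arriving at $u$ to be in $\TRRed$, hence $\obj{u^-}{u}\in\RLCG$, whose only seed via rule (3) is $\obj{b_0}{u}$), and $u$ may be demanded at a step far earlier than the transition you started from, so the primary component of your rank \emph{increases} along this branch; the secondary ``position in the walk'' component cannot repair a primary violation.

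Concretely, take $\anT(a)=\{\antrl a01{\{b_1\}},\antrl a12{\emptyset}\}$, $\anT(b)=\{\antrl b01{\emptyset},\antrl b12{\emptyset},\antrl b23{\{a_2\}}\}$, goal $b_3$ from $\state{a_0,b_0}$, with the minimal trace $\{\antrl b01{\emptyset}\}\concat\{\antrl a01{\{b_1\}}\}\concat\{\antrl a12{\emptyset},\antrl b12{\emptyset}\}\concat\{\antrl b23{\{a_2\}}\}$. The pivots are $b_1$ (demanded at step $2$), $a_2$ (demanded at step $4$) and $b_3=g_\top$, so the fragments are $[b_0,b_1]$, $[b_1,b_3]$ and $[a_0,a_2]$. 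Your derivation of $\obj{a_0}{a_2}\in\RLCG$ needs $\antrl b23{\{a_2\}}\in\TRRed$, which you obtain from $\obj{b_1}{b_3}\in\RLCG$; rule (3) for that objective needs $\antr b01\in\TRRed$, hence $\obj{b_0}{b_1}\in\RLCG$, hence (rule (2)) $\antrl a01{\{b_1\}}\in\TRRed$, hence $\obj{a_0}{a_2}\in\RLCG$ --- a cycle, even though $b_1$ sits strictly later than $a_2$ in your ``is demanded by'' order. The actual fixed point breaks this cycle by putting $\antrl b23{\{a_2\}}$ into $\TRRed$ as part of the acyclic path $b_0\to b_1\to b_2\to b_3$ of the \emph{root} objective $\obj{b_0}{b_3}$, i.e., via an objective spanning several of your fragments; but such multi-fragment walks are not acyclic in general, so you cannot simply substitute $\obj{b_0}{w}$ for $\obj{u}{w}$ everywhere. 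You would need either a substantially more careful choice of which $\RLCG$-objective certifies each transition, or to switch to the paper's contrapositive argument, which never has to exhibit these derivations at all.
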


\Figref{example1r} shows the results of the reduction on the example AN
of \figref{example1} for the reachability of $c_2$ from the state where
all automata start at $0$.
Basically, the local path from $c_0$ to $c_2$ using $d_1$ being impossible to
concretize (because $\valid{\sinit}(\anobj d01)$ is false), it has been removed,
and consequently, so are the transitions involving $b_1$ as $b_1$ is not required for
$c_2$ reachability.
In this example, the subnet computation for reachability properties proposed in
\cite{Talcott2006} would have removed only the transition $\antrl c02{d_1}$ from 
\figref{example1}.

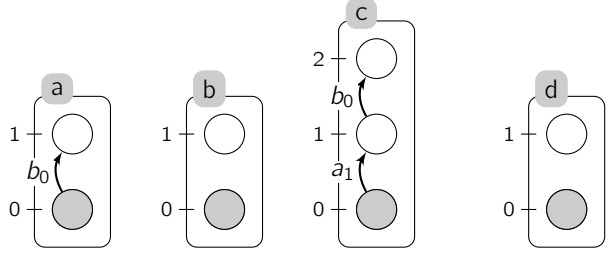
\begin{figure}[t]
\centering
\begin{tikzpicture}
\TSort{(0,0)}{a}{2}{l}
\TSort{(2,0)}{b}{2}{l}
\TSort{(4,0)}{c}{3}{l}
\TSort{(6.5,0)}{d}{2}{l}

\path[local transitions]
  (a_0) edge node[auto] {$b_0$} (a_1)
  (c_0) edge node[auto] {$a_1$} (c_1)
  (c_1) edge node[auto] {$b_0$} (c_2)
;

\TState{a_0,b_0,c_0,d_0}
\end{tikzpicture}
\caption{Reduced automata network from \figref{example1} for the reachability of
$c_2$ from initial state indicated in grey.}
\label{fig:example1r}
\end{figure}

Because the number of objectives is polynomial ($\card{\Obj}=\sum_{a\in\anN} \card{\anS(a)}^2$), the computation
of $\RLCG$ and $\TRRed$ is very efficient, both from a time and space complexity point of view.
The sets $\RLCG\subseteq \Obj$ and $\TRRed\subseteq\anT$ can be built iteratively, from the empty
sets:
when a new objective $\anobj b\star i$ is inserted in $\RLCG$,
each transition in $\tr(\rcsol\sinit(\anobj b\star i))$ is added in $\TRRed$, if not already in;
and for each transition $\antr bjk$ currently in $\TRRed$, the objective $\anobj bki$ is added in
$\RLCG$, if not already in.
When a new transition $\antrl bjk\ell$ is added in $\TRRed$,
for each $a_i\in\ell$, the objective $\anobj a0i$ is added in $\RLCG$, if not already in;
and for each objective $\anobj b\star i$ currently in $\RLCG$, the objective $\anobj bki$ is added in
$\RLCG$, if not already in.

Putting aside the $\tr(\rcsol\sinit)$ computation, the above steps require a polynomial time and a
linear space with respect to the number of transitions and objectives.
The computation of $\tr(\rcsol\sinit(\anobj aij))$ requires a time exponential with the number of local
states in automaton $a$ ($\card{\anS(a)}$), due to the number of acyclic local paths
(\secref{local-causality}), but a quadratic space: indeed, each individual local
acyclic path does not need to be stored, only its set of local transitions, without conditions.
Then, $\valid{\sinit}$ is called at most once per objective.
We assume that the complexity of $\valid{\sinit}$ is polynomial with the number of automata and
transitions and exponential with the maximum number of local states within an automaton
(it is the case of the one presented in \secref{gored-filter})

Overall, the reduction procedure has a polynomial space complexity ($\card{\Obj} + \card{\anT}$) and
time complexity polynomial with the total number of automata
and local transitions, and exponential with the maximum number $k$ of local states within an automaton
($k =\max_{a\in\anN}\card{\anS(a)}$).
Therefore, assuming $k \ll \card{\anN}$, the goal-oriented reduction offers a very low complexity, especially with regard to a full exploration of the $k^{\card{\anN}}$ states.

\section{Experiments}
\label{sec:experiments}

We experimented the goal-oriented reduction on several biological networks and quantify the
shrinkage of the reachable state space.
Then, we illustrate potential applications with the verification of simple reachability, and of cut sets.
In both cases, the reduction drastically increases the tractability of those applications.

\subsection{Results on model reduction}

We conducted experiments on Automata Networks (ANs) that model dynamics of biological networks.
For different initial states, and for different reachability goals, we compared
the number of local transitions in the AN specifications ($\card{\anT}$),
the number of reachable states,
and the size of the so-called complete finite prefix of the unfolding of the net \cite{Esparza08}.
This latter structure is a finite partial order representation of all the possible traces,
which is well studied in concurrency theory.
It aims at offering a compact representations of the reachable state spaces by exploiting the
concurrency between transitions:
if $t_1$ and $t_2$ are playable in a given state and are not in conflict (notably when
$\precond t_1\cap\precond t_2=\emptyset$),
a standard approach would consider 4 global transitions ($t_1$ then $t_2$, and $t_2$ then $t_1$),
whereas  a partial order structure would simply declare $t_1$ and $t_2$ as concurrent, imposing no
ordering between them.
Hence, unfoldings drop part of the combinatorial explosion of the state space due to the interleaving of concurrent transitions.

The selected networks are models of signalling pathways and gene regulatory networks:
two Boolean models of Epidermal Growth Factor receptors (EGF-r) \cite{Sahin09,egfr104},
one Boolean model of tumor cell invasion (Wnt) \cite{Cohen2015},
two Boolean models of T-Cell receptor (TCell-r) \cite{Klamt06,tcrsig94},
one Boolean model of Mitogen-Activated Protein Kinase network (MAPK) \cite{Grieco2013},
one multi-valued model of fate determination in the Vulval Precursor Cells (VPC) in C. elegans \cite{WM13-FG},
one Boolean model of T-Cell differentiation (TCell-d) \cite{Abou-Jaoude2015},
and one Boolean models of cell cycle regulation (RBE2F) \cite{e2frburl}.
The ANs result from automatic translation from the logical network specifications in
the above references; for most models using the \texttt{logicalmodel} tool \cite{logicalmodel}.
Note that the obtained ANs are bisimilar to the logical networks \cite{CHJPS14-CMSB}.
For each of these models, we selected initial states and nodes for which the activation will be the
reachability goal\footnote{Scripts and models
available at \url{http://loicpauleve.name/gored-suppl.zip}}.
Typically, the initial states correspond to various input signal combinations
in the case of signalling cascades, or to pluripotent states for gene networks;
and goals correspond to transcription factors or genes of importance for the model
(output nodes for signalling cascades, key regulators for gene networks).

\Tabref{benchmarks} sums up the results before and after the goal-oriented reduction.
The number of reachable states is computed with \texttt{its-reach} \cite{itstools} 
using a symbolic representation,
and the size of the complete finite prefix (number of instances of transitions) is
computed with \texttt{Mole} \cite{mole}.
The goal-oriented reduction is performed using \Pint{} \cite{PINTurl}.
In each case, the reduction step took less than 0.1s, thanks to its very low complexity when applied to logical networks.

There is a substantial shrinkage of the dynamics for the reduced models, which can turn out
to be drastic for large models.
In some cases, the model is too large to compute the state space without reduction.
For some large models, the unfolding is too large to be computed, whereas it can provide a very
compact representation compared to the state space for large networks exhibiting a high degree of
concurrency (e.g., TCell-d, RBE2F).
In the case of first profile of TCell-d and EGF-r (104)
the reduction removed all the transitions, resulting in an empty model.
Such a behaviour can occur when the local causality analysis statically
detect that the reachability goal is impossible, i.e., the necessary
condition of \secref{gored-filter} is not satisfied.
On the other hand, a non-empty reduced model does not guarantee the goal reachability.
\Supplref{partial} show additional results with the reduction made without the filtering
$\valid{\sinit}$ (\secref{gored-filter}).

\def\mcKO{\multicolumn{2}{c|}{KO}}
\def\mcKOb{\multicolumn{2}{c|}{\bf KO}}
\setlength{\tabcolsep}{1.5mm}
 \renewcommand{\arraystretch}{1.1}
\begin{table}[p]
\centering
\begin{tabular}{|:l||^r|^r|^r||^r^l|^r^l|}
\cline{5-8}
\multicolumn{4}{c|}{} & \multicolumn{4}{c|}{Verification of goal reachability}\\
\hline
Model & $\card{\anT}$ & \# states & |unf| & 
\multicolumn{2}{c|}{\texttt{NuSMV}} & \multicolumn{2}{c|}{\texttt{its-reach}}
\\\hline
\multirow{2}{*}{EGF-r (20)}
& 68 & 4,200 & 1,749 & 0.2s & 10Mb & 0.17 & 7Mb \\
& \textbf{43} & \textbf{722} & \bfseries 336
 & \textbf{0.1} & \textbf{8Mb} & \textbf{0.1s} & \textbf{5Mb}\\\hline
\multirow{2}{*}{Wnt (32)}
	& 197 & 7,260,160 & KO & 30s & 48Mb & 0.3s & 18Mb\\
\rowstyle{\bf}
	& 117 & 241,060 & 217,850 & 0.9s & 32Mb & 0.5s & 17Mb \\\hline
\multirow{2}{*}{TCell-r (40)}
	& 90 & $\approx 1.2\cdot 10^{11}$ & KO & \mcKO & 1.1s & 52Mb\\
\rowstyle{\bf}
	& 46 & 25,092 & 14,071 & 3.8s & 36Mb & 0.6s & 15Mb \\\hline
MAPK (53)
	& 173 & $\approx 3.8\cdot 10^{12}$ & KO & \mcKO & 0.9s & 60Mb\\
profile 1 \rowstyle{\bf}
	& 113 & $\mathbf{\approx 4.5\cdot 10^{10}}$ & KO & \mcKOb & 2s & 48Mb\\\hline
MAPK (53)
	& 173 & 8,126,465 & KO & 63s & 83Mb & 0.2s & 15Mb\\
profile 2 \rowstyle{\bf}
	& 69 & 269,825 & 155,327 & 1.5s & 36Mb & 0.4s & 18Mb \\\hline
\multirow{2}{*}{VPC (88)}
	& 332 & KO & KO & \mcKO & 1s & 50Mb\\
\rowstyle{\bf}
	& 219 & $\mathbf{1.8\cdot 10^9}$ & 43,302 & 236s & 156Mb & 0.8s & 21Mb\\\hline
\multirow{2}{*}{TCell-r (94)}
 &	217& KO  & KO & \multicolumn{2}{c|}{KO} &\multicolumn{2}{c|}{KO} \\
& \textbf{42} & \textbf{54.921} & \bf 1,017 & \textbf{0.4} & \textbf{23Mb} & \textbf{0.26s} & \textbf{14Mb}
\\\hline
TCell-d (101) &
	384 & $\approx 2.7\cdot 10^8$ & 257 & 3s & 40Mb & 0.5s & 24Mb \\
profile 1 \rowstyle{\bf} &
	0 & 1 & 1 & \multicolumn{4}{c|}{\cellcolor{gray}}\\\hline
TCell-d (101) &
	384 & KO & KO & \mcKO & 0.5s & 23Mb \\
profile 2 \rowstyle{\bf} &
	161 & 75,947,684 & KO & 474s & 260Mb & 0.3s & 19Mb\\\hline
EGF-r (104) & 
	378 & 9,437,184 & 47,425 & 7s & 35Mb & 0.6s & 23Mb \\
profile 1 & 
	\textbf{0} & \textbf{1} & \textbf{1} & \multicolumn{4}{c|}{\cellcolor{gray}}
\\\hline
EGF-r (104) &
	378 & $\approx 2.7 \cdot 10^{16}$ & KO & \multicolumn{2}{c|}{KO}& 1.36s & 60Mb
\\
profile 2 \rowstyle{\bf} & 
\textbf{69} & 62,914,560 &  KO & \textbf{11s} & \textbf{33Mb} & \textbf{0.3s} & \textbf{17Mb}
\\\hline
\multirow{2}{*}{RBE2F (370)} &
	742 & KO & KO & \mcKO &\mcKO\\
\rowstyle{\bf} &
	56 & 2,350,494 & 28,856 & 5s & 377Mb & 5s & 170Mb \\\hline
\end{tabular}

\caption{
Comparisons before (normal font) and \textbf{after} (bold font) the goal-oriented AN reduction.
Each model is identified by the system, the number of automata (within
parentheses), and a profile specifying the initial state and the reachability goal.
$\card{\anT}$ is the number of local transitions in the AN specification;
``\#states'' is the number of reachable global states from the initial state;
``|unf|'' is the size of the complete finite prefix of the unfolding.
``KO'' indicates an execution running out of time (30 minutes) or memory.
When applied to goal reachability, we show the total execution time and memory used by the tools
\texttt{NuSMV} and \texttt{its-reach}.
Computation times where obtained on an Intel\textregistered{}
Core\texttrademark{} i7 3.4GHz CPU with 16GB RAM.
\emph{For each case, the reduction procedure took less than 0.1s}.
\tablabel{benchmarks}}
\end{table}
\begin{table}[p]
\centering
\begin{tabular}{|:l|^c|^c|^c|^c|^c|}
\cline{2-6}
\multicolumn{1}{c|}{}
\rowstyle{\normalfont}
& Wnt (32) & TCell-r (40) & EGF-r (104) & TCell-d (101)& RBE2F (370) \\\hline
\multirow{2}{*}{\texttt{NuSMV}} &
	44s 55Mb & KO & KO & KO & KO \\
\rowstyle{\bf} &
	9.1s 27Mb & 2.4s 34Mb & 13s 33Mb & 600s 360Mb & 6s 29Mb\\\hline
\multirow{2}{*}{\texttt{its-ctl}} &
	105s 2.1Gb & 492s 10Gb & KO & KO & KO\\
\rowstyle{\bf} &
	16s 720Mb & 11s 319Mb & 21s 875Mb & KO & 179s 1.8Gb \\\hline
\end{tabular}
\caption{
Comparisons before (normal font) and \textbf{after} (bold font) the goal-oriented AN reduction
for CTL model-checking of cut sets.}
\label{tab:cutsets}
\end{table}

\subsection{Example of application: goal reachability}

In order to illustrate practical applications of the goal-oriented model reduction, we
first
systematically applied model-checking for the goal reachability on the initial and reduced model
(\tabref{benchmarks}).

We compared two different softwares:
\texttt{NuSMV} \cite{NuSMV2} which combines Binary Decision Diagrams and SAT approaches for synchronous systems,
and \texttt{its-reach} \cite{itstools} which implements efficient decision diagram data structures
\cite{Hamez09}.
In both cases, the transition systems specified as input of these tools is an
exact encoding of the asynchronous semantics of the automata networks, where steps
(\defref{step}) are always composed of only one transition.
For \texttt{NuSMV}, the reachability property is specified with CTL \cite{ClarkeEmerson81}
(``\texttt{EF} ${g_\top}$'', $g_\top$ being the goal local state, and \verb|EF| the \emph{exists eventually}
CTL operator).
It is worth noting that \texttt{NuSMV} implements the \emph{cone of influence} reduction
\cite{Biere99verifyingsafety} which removes variables not involved in the property.
\texttt{its-reach} is optimized for checking if a state belongs to the reachable state
space, and cannot perform CTL checking.

Experiments show a remarkable gain in tractability for the model-checking of
reduced networks.
For large cases, we observe that the dynamics can be tractable only after model reduction (e.g.,
TCell-r (94), RBE2F (370)).
\texttt{its-reach} is significantly more efficient than \texttt{NuSMV}
because it is tailored for simple reachability checking, whereas \texttt{NuSMV}
handles much more general properties.

Because the goal-reduction preserves all the minimal traces for the goal reachability, it preserves
the goal reachability: the results of the model-checking is equivalent in the initial and reduced
model.

\subsection{Example of application: cut set verification}

The above application to simple reachability does not requires the preservation of \emph{all} the
minimal traces.
Here, we apply the goal-oriented reduction to the cut sets for reachability, where the
\emph{completeness of minimal traces is crucial}.

Given a goal, a \emph{cut set} is a set of local states such that any trace leading to the
goal involves, in some of its transitions, one of these local states.
Therefore, disabling all the local states of a cut set should make the reachability of the goal
impossible.
This disabling could be implemented by the knock-out/in of the corresponding species in the
biological system:
cut sets predict mutations which should prevent a concerned reachability to occur (e.g., active
transcription factor).
Such cut sets have been studied in \cite{Sahin09,PAK13-CAV}  and are close to intervention sets
\cite{Klamt06} (which are not defined on traces but on pseudo-steady states).

We focus here on verifying if a (predicted) set of local states is, indeed, a cut set for the goal
reachability.
In the scope of this experiment, we consider cut sets that are disjoint with the initial state.
The cut set property can be expressed with CTL:
$\{a_1,b_1\}$ is a cut set for $g_\top$ reachability if the model satisfies the CTL property
\texttt{not E [ (not $a_1$ and not $b_1$) U $g_\top$ ]} (\texttt{U} being the \emph{until}
operator).
The property states that there exists no trace where none of the local state of the cut set is
reached prior to the goal.
It is therefore required that \emph{all} the minimal traces to the goal reachability are present in the
model: if one is missing, a set of local states could be validated as cut set whereas it may not be
involved in the missed trace.

\Tabref{cutsets} compares the model-checking of cut sets properties using \texttt{NuSMV} and
\texttt{its-ctl} \cite{itstools} on a range of the biological networks used in the previous
sections.
Because the dynamical property is much more complex, \texttt{its-reach} cannot be used.
The cut sets have been computed beforehand with \Pint{}.
Because the goal-oriented reduction preserves all the minimal traces to the goal, the results are
equivalent in the reduced models.
Similarly to the simple reachability, the goal-oriented reduction drastically improves the tractability of
large models.

\section{Discussion}
\label{sec:discussion}

This paper introduces a new reduction for automata networks parametrized by a
reachability property of the form: from a state $\sinit$ there exists a trace which
leads to a state where a given automaton $g$ is in state $g_\top$.

The goal-oriented reduction preserves \emph{all} the minimal traces satisfying
the reachability property under a general concurrent semantics which allows at
each step simultaneous transitions of an arbitrary number of automata.
Those results straightforwardly apply to the asynchronous semantics where only
one transition occurs at a time:
any minimal trace of the asynchronous semantics is a minimal trace in the
general concurrent semantics.

Its time complexity is polynomial in the total number of transitions and
exponential with the maximal number of local states within an automaton.
Therefore, the procedure is extremely scalable when applied on
networks between numerous automata, but where each automaton has a few local states.

Applied to logical models of biological networks, the goal-oriented reduction
can lead to a drastic shrinkage of the reachable state space with a negligible
computational cost.
We illustrated its application for the model-checking of simple reachability
properties, but also for the validation of cut sets, which requires the
completeness of minimal traces in the reduced model.
It results that the goal-oriented reduction can increase considerably the scalability of
the formal analysis of dynamics of automata networks.

The goal is expressed as a single local state reachability, which also allows to
to support sequential reachability properties between (sub)states
using an extra automaton.
For instance, the property ``reach $a_1$ and $b_1$, then reach $c_1$'' can be encoded
using one extra automaton $g$, where $\antrl g01{\{a_1,b_1\}}$ and $\antrl g1\top{\{c_1\}}$.

Further work consider performing the reduction on the fly, during the state
space exploration, expecting a stronger pruning.
Although the complexity of the reduction is low, such approaches
would benefit from heuristics to indicate when a new reduction step may be
worth to apply.

\bibliographystyle{plainurl}
\bibliography{bibliography,pauleve}
\appendix
\section{Proof of minimal traces preservation}
\suppllabel{gored-proof}

We assume a global AN $(\anN,\anS,\anT)$ where
$g\in\anN$, $g_\top\in\anS(g)$, and $\sinit\in\anS$ with
$\get{\sinit}{g}\neq g_\top$.

From \ptyref{csol} and \defref{valid}, any trace reaching first $a_i$ and then $a_j$ uses all
the transitions of at least one local path in $\rcsol\sinit(\anobj aij)$.

We first prove with \lemref{last-is-known}
that the last transition of a minimal trace $\pi$ for $g_\top$ reachability, of the form $\pi^{\card\pi}=\{\antr gi\top\}$, is necessarily in $\TRRed$.
Indeed, by definition of $\RLCG$, $\anobj g0\top\in\RLCG$;
and by \lemref{csol-end}, $\antr gi\top \notin
\rcsol\sinit(\anobj g0\top)$ implies that reaching $g_i$ requires to reach $g_\top$ beforehand.

\begin{lemma}
Given $\antr aji\in\anT$, if $\antr aji \notin \tr(\rcsol\sinit(\anobj a0i))$,
then for any trace $\pi$ from $\sinit$
with $a_j\in\postcond{\trace^v}$
and $a_i\in\postcond{\trace^w}$ for some $v,w\in\indexes\pi$,
there exists $u<v$ with $a_i\in\postcond{\trace^u}$.
\label{lem:csol-end}
\end{lemma}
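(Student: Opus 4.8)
The plan is a proof by contradiction. Suppose there is a trace $\trace$ from $\sinit$ and indices $v,w\in\indexes\trace$ with $a_j\in\postcond{\trace^v}$ and $a_i\in\postcond{\trace^w}$, yet $a_i\notin\postcond{\trace^u}$ for every $u<v$; from this I would derive $\antr aji\in\tr(\rcsol\sinit(\anobj a0i))$, contradicting the hypothesis of the lemma. First I would normalise $v$: since $a_j\in\postcond{\trace^v}$, automaton $a$ occupies the local state $a_j$ in $\sinit\play\trace^1\play\cdots\play\trace^v$ — whether $a_j$ was contributed there as the destination of the $a$-transition of $\trace^v$ or as an enabling condition of some transition of $\trace^v$ that survived the step — so I may take $v$ to be the least such index. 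Because enabling conditions never mention an automaton's own local states (\defref{cfsm}) and $a$ can enter $a_i$ only through a step whose post-condition then lists $a_i$, the assumption $a_i\notin\postcond{\trace^u}$ for $u<v$ forces automaton $a$ to avoid $a_i$ throughout $\sinit,\ \sinit\play\trace^1,\ \dots,\ \sinit\play\trace^1\play\cdots\play\trace^{v-1}$ (this uses $a_i\neq a_0$, which holds whenever the lemma is applied).

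Next I would extract a local acyclic path towards $a_j$. Since $\get\sinit a=a_0$, the $a$-transitions that occur inside $\trace^{1..v}$ realise a walk from $a_0$ to $a_j$ (if $a_j=a_0$, take $\lpath=\emptyseq$); by \ptyref{csol}, applied with the first index carrying an $a$-transition and with $v$ as the two bounds, there is a local acyclic path $\lpath\in\csol(\anobj a0j)$ whose transitions all appear, in order, within $\trace^{1..v}$. By the previous paragraph none of those transitions has $a_i$ as origin or as destination, so $\lpath$ only visits local states different from $a_i$. Moreover every enabling condition $b_k\in\dep(\lpath^n)$ occurs in $\precond{\trace^m}$ for the step $m$ of $\trace$ that carries $\lpath^n$, hence $b_k$ is reachable from $\sinit$; as $\valid\sinit$ over-approximates reachability of objectives (\defref{valid}), this yields $\valid\sinit(\obj{b_0}{b_k})$ for all such $b_k$, i.e.\ $\lpath\in\rcsol\sinit(\anobj a0j)$.

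Finally I would append $t\DEF\antr aji$ to $\lpath$. The sequence $\lpath\concat t$ runs from $a_0$ to $a_i$ with matching endpoints ($\dest(\lpath^{\card\lpath})=a_j=\orig(t)$), and it is acyclic: $\lpath$ is acyclic and the only newly visited local state $a_i$ is the destination of the last transition and is the origin of no transition of $\lpath$, so the acyclicity clause of \defref{csol} still holds. For $\lpath\concat t\in\rcsol\sinit(\anobj a0i)$ it only remains to see that the enabling conditions of $t$ lead to valid objectives as well; this is the step that relies on $t$ being fired along $\trace$ — exactly the way the lemma gets used, where $t=\antr gi\top$ is the final step $\trace^{\card\trace}$ of the minimal trace, so each $b_k\in\dep(t)$ lies in $\precond{\trace^{\card\trace}}$, is reachable from $\sinit$, and hence valid. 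Then $\antr aji=t\in\tr(\lpath\concat t)\subseteq\tr(\rcsol\sinit(\anobj a0i))$, the required contradiction, and some $u<v$ with $a_i\in\postcond{\trace^u}$ must exist. I expect the real work to concentrate in this last paragraph: making ``$\lpath$ avoids $a_i$'' rigorous — distinguishing the two ways $a_j$ can enter $\postcond{\trace^v}$, excluding $a_i$ from every $\postcond{\trace^u}$ with $u<v$ including via enabling conditions, and dealing with the boundary cases $a_i=a_0$ and $a_j=a_0$ — and, above all, pinning down why the enabling conditions of $t$ are valid, which becomes immediate once one records that $t$ is actually performed by the trace under consideration.
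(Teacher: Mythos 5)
Your proof is correct and takes essentially the same route as the paper's: argue by contradiction, extract via \ptyref{csol} a local acyclic path $\lpath\in\rcsol\sinit(\anobj a0j)$ from $\trace^{1..v}$ that avoids $a_i$, and append $\antr aji$ to obtain an element of $\rcsol\sinit(\anobj a0i)$ containing $\antr aji$ — the paper compresses this into two lines and you supply the details it omits. Your observation that the validity of the objectives induced by $\dep(\antr aji)$ (and the exclusion of the degenerate case $a_i=a_0$) is not derivable from the lemma's stated hypotheses but only from the context in which it is applied (where $\antr gi\top$ is actually fired and $g_\top\neq g_0$) is a fair point that the paper's own proof silently elides; it does not change the approach.
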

\begin{proof}
Let $\lpath\in\rcsol\sinit(\anobj a0j)$ be an acyclic local path such that
$\forall n\in\indexes\lpath$, $a_i\neq\dest(\lpath^n)$.
The sequence $\lpath\concat \antr aji$ is then acyclic and, by definition, belongs to
$\rcsol\sinit(\anobj a0i)$, which is a contradiction.
\qed
\end{proof}

\begin{lemma}
If $\pi$ is a minimal trace for $g_\top$ reachability from state $\sinit$,
then, necessarily,
$\pi^{\card\pi}\subseteq\TRRed$.
\label{lem:last-is-known}
\end{lemma}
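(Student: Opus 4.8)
The plan is to prove that the final step of a minimal trace $\pi$ reaching $g_\top$ consists of a single transition of the form $\antr gi\top$ which lies in $\TRRed$. I would split this into two parts: first, a structural observation that $\pi^{\card\pi}$ must be exactly $\{\antr gi\top\}$ for some $g_i$; second, that this transition belongs to $\tr(\rcsol\sinit(\anobj g0\top))$, hence to $\TRRed$, since $\anobj g0\top\in\RLCG$ by \defref{rlcg}(1).

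\textbf{Step 1: the last step is a single transition changing $g$ into $g_\top$.} Since $\pi$ reaches $g_\top$ from $\sinit$ and $\get{\sinit}g\neq g_\top$, some step of $\pi$ contains a transition with destination $g_\top$; let $\pi^n$ be the last such step. If $n<\card\pi$, then the steps $\pi^{n+1},\dots,\pi^{\card\pi}$ play no role in achieving $g_\top\in\postcond\pi$ (no later step re-enters $g_\top$, and once $g_\top$ is in $\postcond{\pi^n}$ it is only removed if $g$ moves again, but no later transition targets $g_\top$; more carefully, I would argue that truncating $\pi$ at step $n$ still yields a valid trace reaching $g_\top$, contradicting minimality via \defref{minimal} with $\phi$ the identity on $\indexes{\pi^{1..n}}$). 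So $n=\card\pi$. Next, within $\pi^{\card\pi}$, every transition other than the one targeting $g_\top$ can be removed: dropping them yields a sub-step $\pi^{\card\pi}\cap T(g)$ that is still playable (its pre-condition is a subset of $\precond{\pi^{\card\pi}}$) and still reaches $g_\top$, again contradicting minimality unless $\pi^{\card\pi}=\{\antr gi\top\}$ for some $g_i\in\anS(g)$.

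\textbf{Step 2: $\antr gi\top\in\TRRed$.} Suppose for contradiction that $\antr gi\top\notin\tr(\rcsol\sinit(\anobj g0\top))$. By \lemref{csol-end} applied with $a=g$, $j=i$, and noting that $\get{\sinit}g=g_0$ and $g_i\in\postcond{\pi^{\card\pi}}$ is preceded (trivially, as $g_0=\get\sinit g$, or via an earlier step) — concretely, since $g$ starts in $g_0$ and ends in $g_\top$, there is some step $\pi^v$ with $g_i\in\postcond{\pi^v}$ and $v<\card\pi$ if $i\neq 0$, or $i=0$ in which case $a_j=\orig$ is the initial state — \lemref{csol-end} forces the existence of some $u<\card\pi$ with $g_\top\in\postcond{\pi^u}$. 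But that contradicts the maximality of $n=\card\pi$ established in Step 1 as the \emph{last} step whose post-condition contains $g_\top$, unless $g_\top$ is also later removed and re-added, which cannot happen since no transition other than the final one targets $g_\top$ along $\pi$. Hence $\antr gi\top\in\tr(\rcsol\sinit(\anobj g0\top))\subseteq\TRRed$, so $\pi^{\card\pi}=\{\antr gi\top\}\subseteq\TRRed$.

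\textbf{The main obstacle} I anticipate is the careful bookkeeping in Step 1 and in the application of \lemref{csol-end}: one must handle the degenerate case $g_i=g_0$ (where no transition is needed to ``reach'' $g_0$, so $\rcsol\sinit(\anobj g00)=\{\emptyseq\}$ and $\lemref{csol-end}$ is applied with the empty local path $\lpath=\emptyseq$, making $\lpath\concat\antr g0\top$ acyclic and automatically in $\rcsol\sinit(\anobj g0\top)$ provided its enabling condition is valid) and must rule out spurious occurrences of $g_\top$ in intermediate post-conditions. The cleanest route is probably to first establish, as a sub-claim, that along a minimal trace the automaton $g$ never visits $g_\top$ except in the final state — which follows because any earlier visit would allow truncation — and then invoke \lemref{csol-end} with this in hand.
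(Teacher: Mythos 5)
Your proof is correct and follows essentially the same route as the paper's: first reduce to the case $\pi^{\card\pi}=\{\antr gi\top\}$ using minimality, then apply \lemref{csol-end} to $\anobj g0\top\in\RLCG$ to obtain an earlier step whose post-condition contains $g_\top$, contradicting minimality. The paper compresses your Step~1 into a one-line ``without loss of generality'' and does not discuss the degenerate case $g_i=g_0$, so your version is merely a more detailed rendering of the same argument.
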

\begin{proof}
As $\pi$ is minimal for $g_\top$ reachability, without loss of generality, we can assume that
$\pi^{\card\pi} = \{\antr gi\top\}$.
By definition, $\tr(\rcsol\sinit(\anobj g0\top))\subseteq\TRRed$.
By \lemref{csol-end}, if $\antr gi\top\notin\tr(\rcsol\sinit(\anobj g0\top))$,
then there exists $u < \card\pi$ such that
$g_\top \in \postcond{\pi^u}$; hence, $\pi$ would be non minimal.
\qed
\end{proof}


The rest of the proof of \thmref{gored} is derived by contradiction:
if a transition of $\pi$ is not in $\TRRed$, we can build a sub-trace of $\pi$
which preserves $g_\top$ reachability, therefore $\pi$ is not minimal.

Given a transition $\antr aij$ in the $q$-th step of $\pi$ that is not in $\TRRed$, removing
$\antr aij$ from $\pi^q$ would imply to remove any further transition that depend causally
on it.
Two cases arise from this fact:
either all further transitions that depend on $a_j$ must be removed;
or $\antr aij$ is part of loop within automaton $a$, and it is sufficient to remove the loop from $\pi$.

%

\Lemref{cycle} ensures that if $\anobj azk$ is in $\RLCG$ and if
$a_z$ occurs before the $q$-th step and $a_k$ after the $q$-th step of $\trace$,
then $\antr aij\notin\tr(\rcsol{\sinit}(\anobj azk))$ only if
$\antr aij$ is part of a loop, i.e., there are two steps surrounding $q$
where the automaton $a$ is in the same state before their application.

\begin{lemma}
Given $a\in\anN$ and $u,q,v\in\indexes\trace$,
$u\leq q< v$,
with
$a_z \in\precond{\trace^u}$,
$a_k \in \precond{\trace^v}\cup\postcond{\trace^v}$,
and
$\antr aij\in\trace^q\setminus\TRRed$,
if $\anobj azk\in\RLCG$
then
$\exists m,n\in\range uv$, $m\leq q\leq n$ such that
$\postcond{(\trace^{1..m-1})}\cap\anS(a)
=\postcond{(\trace^{1..n})}\cap\anS(a)$;
and
$a_k\in\precond{\trace^v}\Rightarrow n <v$.
\lemlabel{cycle}
\end{lemma}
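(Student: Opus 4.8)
The plan is to reason about the trajectory of automaton $a$ inside the trace $\trace$, seen as the sequence of local states that $a$ passes through as the steps of $\trace$ are applied. First I would set up notation: for each index $p\in\indexes\trace$, let $\alpha^p$ be the unique local state of automaton $a$ in the intermediate global state $\sinit\play\trace^1\play\cdots\play\trace^{p-1}$, i.e.\ $\alpha^p = \get{(\sinit\play\trace^{1..p-1})}{a}$. The hypotheses then say $\alpha^u = a_z$ (from $a_z\in\precond{\trace^u}$, since each automaton carries exactly one local state and $\precond{\trace^u}$ contains at most one local state of $a$), that the transition $\antr aij$ fired at step $q$ means $\alpha^q = a_i$ and $\alpha^{q+1}=a_j$, and that the occurrence of $a_k$ at or after step $v$ means $a_k$ appears somewhere in the sequence $(\alpha^p)_{p\ge v}$ — precisely $\alpha^v = a_k$ when $a_k\in\precond{\trace^v}$, and $\alpha^{v+1}=a_k$ or $\alpha^v=a_k$ when $a_k\in\postcond{\trace^v}$. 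Restating the conclusion: I must find $m\le q\le n$ in $[u;v]$ with $\alpha^m = \alpha^{n+1}$ (this is exactly $\postcond{(\trace^{1..m-1})}\cap\anS(a) = \postcond{(\trace^{1..n})}\cap\anS(a)$ once one checks both sides equal $\{\alpha^m\}=\{\alpha^{n+1}\}$ using that $\trace^{1..p}$ ends automaton $a$ in state $\alpha^{p+1}$), plus the refinement that $n<v$ when $\alpha^v=a_k$.

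Next I would extract a local acyclic path from this trajectory and relate it to $\rcsol\sinit$. The idea: between the step where $a$ is in state $a_z$ (index $u$) and the step where $a$ reaches $a_k$, the transitions of $a$ that actually fire in $\trace$ form, after deleting cycles, an acyclic local path from $a_z$ to $a_k$. Since $\anobj azk\in\RLCG$, Definition~\ref{def:rlcg} gives $\tr(\rcsol\sinit(\anobj azk))\subseteq\TRRed$. Now the key dichotomy: consider the sub-sequence of transitions of $T(a)$ fired by $\trace$ strictly between the last visit to $a_z$ before step $q$ (or at $u$) and the first visit to $a_k$ at or after step $v$; this sub-sequence realises a walk in automaton $a$ from $a_z$ to $a_k$ passing through the edge $\antr aij$ at position $q$. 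If that walk were acyclic it would be a member of $\csol(\anobj azk)$, and — here I must also check the enabling-condition side-condition defining $\rcsol\sinit$, which holds because $\trace$ is an actual trace, so every enabling condition $b_\ell$ of a fired transition is met at that point, hence $b_\ell$ is reachable from $\sinit$, hence $\valid\sinit(\anobj b0\ell)$ by Definition~\ref{def:valid} — it would lie in $\rcsol\sinit(\anobj azk)$, forcing $\antr aij\in\TRRed$, contradicting the hypothesis $\antr aij\in\trace^q\setminus\TRRed$. Therefore the walk is not acyclic: it revisits some local state of $a$. I then argue this repeated state can be chosen to straddle position $q$ — if a cycle lay entirely before $q$ or entirely after $q$ I could splice it out of the walk and repeat the argument, strictly shortening the walk, so by induction on the walk's length I reach a walk whose only cycle (if any) necessarily encloses the edge $\antr aij$; the endpoints of that cycle give indices $m\le q\le n$ with $\alpha^m=\alpha^{n+1}$, all within $[u;v]$ by construction.

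For the final clause $a_k\in\precond{\trace^v}\Rightarrow n<v$, I would observe that in this case $\alpha^v = a_k$, and I can take the walk to stop at the last step before index $v$; the repeated state witnessing the cycle then occurs among indices $<v$, so $n<v$. Alternatively, if the only cycle available had $n = v-1$ with $\alpha^v = a_k$ on its boundary, the cycle would still close before step $v$ is applied, keeping $n \le v-1 < v$; I would make this precise by defining the walk on steps $[u;v-1]$ when $\alpha^v = a_k$ and on $[u;v]$ otherwise.

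\textbf{Main obstacle.} The routine part is the bookkeeping translating ``$\postcond{}$ restricted to $\anS(a)$'' into the single state $\alpha^p$ and checking the enabling-condition side-condition of $\rcsol\sinit$. The genuine difficulty is the combinatorial argument that forces the cycle to \emph{straddle} position $q$: a priori the walk from $a_z$ to $a_k$ could be non-acyclic only because of cycles that avoid step $q$ entirely, and those must be eliminated without destroying the property of passing through $\antr aij$ at step $q$. Handling this cleanly — probably by an induction on walk length where at each stage one either exhibits the required straddling cycle or splices out a non-straddling one — is where the proof has to be careful, and it is the step I expect to occupy most of the write-up.
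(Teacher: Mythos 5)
Your proposal follows essentially the same route as the paper's proof: since $\anobj azk\in\RLCG$ forces $\tr(\rcsol\sinit(\anobj azk))\subseteq\TRRed$, the transition $\antr aij\notin\TRRed$ cannot appear on any acyclic local path realising $\anobj azk$ inside $\trace$, so it must sit on a cycle of automaton $a$'s local walk, whose endpoints give $m$ and $n$. The paper states this in four terse sentences; your write-up supplies the details it leaves implicit (the $\valid\sinit$ side-condition and the argument that the cycle can be chosen to straddle $q$), but the underlying argument is the same.
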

\begin{proof}
If $\antr aij\notin\TRRed$ and $\anobj azk\in\TRRed$, necessarily
$\antr aij\notin\tr(\rcsol\sinit(\anobj azk))$.
Therefore $\antr aij$ belongs to a loop of a local path from $a_z$ (at index $u$ in $\trace$) to $a_k$
(at index $v$ in $\trace$).
Hence, $\exists m,n\in\range{u}{v}$ with $m \leq q \leq n$
and $a_h,a_x,a_y\in \anS(a)$
such that
$\antr ahx\in\trace^m$ and
$\antr ayh\in\trace^n$;
therefore
$\postcond{(\trace^{1..m-1})} \cap \anS(a)
=\postcond{(\trace^{1..n})}\cap\anS(a) = a_h$.
In the case where $a_k\in\precond{\trace^v}$,
$a_k\neq a_h$, hence $n < v$.
\qed
\end{proof}

Intuitively, \lemref{cycle} imposes that $\pi$ has the following form:
\[
\begin{array}{ccccccccccc}
& \text{\rotatebox{270}{$a_z\in$}}
&&&& \text{\rotatebox[origin=c]{90}{$\notin\TRRed$}}
&&&& \text{\rotatebox{270}{$a_k\in$}}
\\
\pi = \cdots & \concat\pi^u\concat & \cdots & \concat \antr ahx \concat & \cdots &
\concat \mathbf{\antr aij} \concat & \cdots & \concat\antr ayh\concat & \cdots & \concat\pi^v\concat & \cdots
\\
&\scriptstyle u & & \scriptstyle m&& \scriptstyle q && \scriptstyle n && \scriptstyle v
\end{array}
\]
given that $\anobj azk\in\RLCG$.

The idea is then to remove the transitions forming the loop within automaton $a$.
However, transitions in other automata may depend causally on the transitions
that compose the local loop in automaton $a$ within steps $m$ and $n$, following the notations in \lemref{cycle}.

\Lemref{cb} establishes that we can always find $m$ and $n$ such that
none of the transitions within these steps with an enabling condition depending on automaton $a$
are in $\TRRed$.
Indeed, if a transition in $\TRRed$ depends on a local state of $a$, let us call it $a_p$, 
the objectives $\anobj a0p$ and $\anobj apk$ are in $\RLCG$, due to the second and third condition in
\defref{rlcg}.
\Lemref{cycle} can then be applied on the subpart of $\pi$ that contains the transition $\antr aij$
not in $\TRRed$ and that concretizes either $\anobj a0p$ or $\anobj apk$ to identify a smaller loop
containing $\antr aij$.

\begin{lemma}
Let us assume $a\in\anN$
and
$q\in\indexes\trace$
with
$\antr aij\in\trace^q\setminus\TRRed$.
There exists $m,n\in\indexes\trace$
with
$m\leq q\leq n$
such that
$\forall t\in\tr(\trace^{m+1..n})$,
$\dep(t)\cap\anS(a)\neq\emptyset\Rightarrow t\notin\TRRed$,
and,
if $a=g$ or
$\exists t\in\tr(\trace^{n+1..\card\trace})\cap\TRRed$
with
$\dep(t)\cap\anS(a)\neq\emptyset$,
then
$\postcond{(\trace^{1..m-1})}\cap\anS(a)
=\postcond{(\trace^{1..n})}\cap\anS(a)$
.
\lemlabel{cb}
\end{lemma}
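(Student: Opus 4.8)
The plan is to prove this by a well-founded induction on the number of transitions occurring strictly between the "bracketing" steps $m$ and $n$, using Lemma~\ref{lem:cycle} as the engine that produces an initial loop and then shrinking that loop whenever it still contains an offending transition (one in $\TRRed$ whose enabling condition touches automaton $a$).

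First I would set up the base case. Consider the transition $\antr aij \in \trace^q \setminus \TRRed$. I want to produce a loop of automaton $a$ straddling step $q$, i.e.\ indices $m \leq q \leq n$ with $\postcond{(\trace^{1..m-1})}\cap\anS(a) = \postcond{(\trace^{1..n})}\cap\anS(a)$. To invoke Lemma~\ref{lem:cycle} I need an objective $\anobj azk \in \RLCG$ with $a_z$ appearing before or at $q$ and $a_k$ at or after $q$. There are two situations. If $a = g$: since $\get{\sinit}{g}\neq g_\top$, the goal state $g_\top$ is reached somewhere after step $q$ (because $\antr aij$ occurs, automaton $g$ is active, and $\pi$ does reach $g_\top$); taking $a_z = g_0$ (the initial local state of $g$) and $a_k = g_\top$, the objective $\anobj g0\top \in \RLCG$ by Definition~\ref{def:rlcg}(1). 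Otherwise, if $a \neq g$ and some $t \in \tr(\trace^{n'+1..\card\trace}) \cap \TRRed$ (for the provisional $n'$ we are about to find) has $\dep(t)\cap\anS(a) = \{a_p\}$ for some $p$: then by Definition~\ref{def:rlcg}(2), $\anobj a0p \in \RLCG$, and since $a_p$ is reached after step $q$ while $a_0$ holds before step $q$, Lemma~\ref{lem:cycle} applies with $z=0$, $k=p$ and gives the bracketing loop. If neither situation holds — $a\neq g$ and no $\TRRed$-transition after step $n$ depends on $\anS(a)$ — then the implication we must prove is vacuous for any choice of $m\leq q\leq n$, so we just need $\forall t\in\tr(\trace^{m+1..n}), \dep(t)\cap\anS(a)\neq\emptyset \Rightarrow t\notin\TRRed$, which is the inductive part handled below.

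The inductive step is the heart. Suppose we have $m \leq q \leq n$ giving a loop of $a$, but some $t \in \tr(\trace^{m+1..n})$ has $\dep(t)\cap\anS(a) = \{a_p\}$ and $t \in \TRRed$. Because $t\in\TRRed$ and $a_p \in \dep(t)$, Definition~\ref{def:rlcg}(2) puts $\anobj a0p \in \RLCG$; and because $\anobj azk\in\RLCG$ was used to build the current loop, Definition~\ref{def:rlcg}(3) (applied to the transition $\antr aij$'s local path context, or more precisely to any transition of $\rcsol\sinit$ concretizing $\anobj a0p$ — note $\antr aij\notin\TRRed$ so it is \emph{not} in any such $\rcsol\sinit$ set) gives $\anobj apk \in \RLCG$ as well. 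Now observe that $a_p$ is reached at some step $r$ with $m < r \leq n$ (it is the enabling condition of $t$, which lies in $\trace^{m+1..n}$), and $\antr aij$ at step $q$ lies either before or after $r$. If $q < r$: apply Lemma~\ref{lem:cycle} to the sub-trace between step $m$ (where $a$ is in the state $a_h$ shared at $m-1$ and $n$) and step $r$, using objective $\anobj hp$ — wait, more carefully, using that $a$ is in a fixed local state just before $m$ and $a_p$ is reached by step $r$, with the objective being the appropriate member of $\RLCG$; this yields a strictly smaller loop $[m', n']$ with $m' \leq q \leq n' < r \leq n$, hence strictly fewer transitions in between. If $q > r$: symmetrically apply Lemma~\ref{lem:cycle} on the sub-trace from step $r$ to step $n$, obtaining $r \leq m' \leq q \leq n' \leq n$, again a strictly smaller loop. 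The measure "number of transitions strictly between the brackets" strictly decreases, so the induction terminates; at termination no such offending $t$ remains inside the loop, giving the first conclusion, and the loop property gives the second.

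The main obstacle I anticipate is the bookkeeping in the inductive step: precisely justifying that the objective feeding Lemma~\ref{lem:cycle} at the smaller scale is genuinely in $\RLCG$ (one must chase through conditions (2) and (3) of Definition~\ref{def:rlcg} and be careful that $\antr aij \notin \TRRed$, so it never itself appears in an $\rcsol\sinit(\cdot)$ set, which is exactly what forces the "loop" alternative rather than the "membership" alternative in Lemma~\ref{lem:cycle}), and that the indices genuinely nest so that the termination measure drops. The geometric picture — a nested sequence of shrinking loops of automaton $a$, each still containing step $q$ — is clean, but making the index arithmetic airtight when $q$ can fall on either side of the newly-found witness $a_p$ requires the explicit case split above. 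Everything else (the vacuous case, the base case when $a=g$) is routine.
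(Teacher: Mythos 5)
Your proposal is correct and follows essentially the same route as the paper's own proof: the same case analysis to seed the loop (vacuous case with $m=q$, $n=\card\trace$ when $a\neq g$ and nothing in $\TRRed$ after $q$ depends on $a$; a dependent $\TRRed$-transition after $q$ giving $\anobj a0k\in\RLCG$; the case $a=g$ via $\anobj g0\top$ with $n<\card\trace$ by minimality), followed by the same inductive shrinking of the loop through \lemref{cycle} with the case split on whether the offending dependency at step $r$ falls before or after $q$. The one slightly garbled spot is your justification that $\anobj apk\in\RLCG$ (it follows from condition (3) of \defref{rlcg} because some transition of $\tr(\rcsol\sinit(\anobj a0p))\subseteq\TRRed$ ends at $a_p$, combined with $\anobj azk\in\RLCG$), but the paper is equally terse there and your conclusion is the right one.
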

\begin{proof}
First, let us assume that $a\neq g$ and
for any $t\in\trace^{q+1..\card\trace}$,
$\dep(t)\cap\anS(a)\neq\emptyset \Rightarrow t\notin \TRRed$:
the lemma is verified with $m=q$ and $n=\card\trace$.

Then, let us assume there exists $v\in\range{q+1}{\card\trace}$ such that
$\exists t\in\tr(\trace^v)\cap\TRRed$ with
$a_k\in\dep(t)$.
By \defref{rlcg},
this implies $\anobj a0k\in\RLCG$.
By \lemref{cycle}, there exists
$m,n\in\range 1 {v-1}$ with $m\leq q\leq n$ such that
$\postcond{(\trace^{1..m-1})}\cap\anS(a)
=\postcond{(\trace^{1..n})}\cap\anS(a)$.

Otherwise, $a=g$, and by \lemref{cycle} with $a_k=g_\top$, there exists
$m,n\in\range 1 {\card\trace}$ with $m\leq q\leq n$ and $m\neq n$ such that
$\postcond{(\trace^{1..m-1})}\cap\anS(a)
=\postcond{(\trace^{1..n})}\cap\anS(a)$.
Remark that it is necessary that $n < \card\trace$:
if $n=\card\trace$, $g_\top\in\postcond{(\trace^{1..m-1})}$, so $\trace$ would be not minimal.

In both cases, if there exists $r\in\range {m+1} n$ such that
$\exists a_p\in\anS(a)$ and $\exists t\in\trace^r$ with
$a_p\in\dep(t)$,
then $t\in\TRRed$ implies that $\anobj a0p\in\RLCG$ and $\anobj apk\in\RLCG$
(\defref{rlcg}).
If $r > q$, by \lemref{cycle} with
$a_k=a_p$ and $v=r$,
there exists $m',n'\in\range {m+1} n$
such that $m' \leq q \leq n' < r\leq n$ with
$\postcond{(\trace^{1..m'-1})}\cap\anS(a)
=\postcond{(\trace^{1..n'})}\cap\anS(a)$.
If $r \leq q$, by \lemref{cycle} with
$a_0=a_p$ and $u=r$,
there exists $m',n'\in\range {m+1}n$
such that $r \leq m' \leq q \leq n'$ with
$\postcond{(\trace^{1..m'-1})}\cap\anS(a)
=\postcond{(\trace^{1..n'})}\cap\anS(a)$.
Therefore, by induction with \lemref{cycle},
there exists $m,n\in\indexes\trace$
such that
$\forall t\in\tr(\trace^{m+1..n})$,
$\dep(t)\cap\anS(a)\neq\emptyset \Rightarrow t\notin\TRRed$.
\qed
\end{proof}

Using \lemref{cb}, we show how we can identify a subset of transitions in $\trace$ that can be
removed to obtain a sub-trace for $g_\top$ reachability.
In the following, we refer to the couple $(m,n)$ of \lemref{cb}
with $\cb(\pi,a,q)$ (\defref{cb}).

\begin{definition}[$\cb(\trace,a,q)$]
\deflabel{cb}
Given $a\in\anN$, $q\in\indexes\trace$ with $t\in\trace^q\setminus\TRRed$ and $\anN(t)=a$,
we define $\cb(\pi,a,q) = (m,n)$ where $m,n\in\indexes\trace$ such that:
\begin{itemize}
\item $\forall t\in\tr(\trace^{m+1..n})$, $\dep(t)\cap\anS(a)\neq\emptyset\Rightarrow 
t\notin\TRRed$;
\item 
$a=g\vee \exists t\in\tr(\trace^{n+1..\card\trace})\cap\TRRed$
with
$\dep(t)\cap\anS(a)\neq\emptyset$
$\Longrightarrow$
$\postcond{(\trace^{1..m-1})}\cap\anS(a)
=\postcond{(\trace^{1..n})}\cap\anS(a)$.
Moreover, if $a=g$, then $n<\card\trace$.
\end{itemize}
\end{definition}


We use \lemref{cb} to collect the portions of $\pi$ to redact according to each automaton.
We start from the last transition in $\pi$ that is not in $\TRRed$:
if $\tr(\pi)\not\subseteq\TRRed$,
there exists $l \in\indexes\pi$ such that
$\pi^l\not\subseteq\TRRed$
and $\forall n>l, \pi^n\subseteq\TRRed$.
By \lemref{last-is-known}, we know that $l < \card\trace$.
Let us denote by $\antr bij$ one of the transitions in $\pi^l$ which is not in $\TRRed$.

We define
$\redact \subseteq \anN \times \indexes\trace \times \indexes\trace$
the smallest set which satisfies:
\begin{itemize}
\item $(b,m,n) \in\redact$ if $\cb(\trace,l,b) = (m,n)$
\item $\forall(a,m,n)\in\redact$,
	$\forall q\in[m+1;n]$,
	$\forall t\in\trace^q$,
 $\dep(t)\cap\anS(a)\neq\emptyset \Longrightarrow (\anN(t),m',n') \in\redact$
		where $\cb(\trace,q,\anN(t)) = (m',n')$.
\end{itemize}

Finally, let us define the sequence of steps $\traceb$
as the sequence of steps $\pi$ where the transitions delimited by $\redact$ are removed:
for each $(a,m,n)\in\redact$, all the transitions of automaton $a$ occurring
between $\trace^m$ and $\trace^n$ are removed.
Formally,
$\card\traceb=\card\trace$
and
for all $q\in\indexes\trace$,
$\traceb^q \DEF \{ t \in\trace^q\mid \nexists (a,m,n)\in\redact: a=\anN(t) \wedge
m\leq q\leq n \}$.

From \lemref{cb} and $\redact$ definition, $\traceb$ is a valid trace.
Moreover, by \lemref{cb}, there is no $q\in\indexes\trace$ such that
$(g,q,\card\trace)\in\redact$, hence $g_\top \in\postcond\traceb$.
Therefore, $\trace$ is not minimal, which contradicts our hypothesis.
$\qed$

\begin{example}
Let us consider the reachability of $c_2$ in the AN of \figref{example1} from state
$\state{a_0,b_0,c_0,d_0}$.
The transitions $\TRRed$ preserved by the reduction for that goal
are listed in \figref{example1r}.

Let $\pi$ be the following trace in the AN of \figref{example1}:
\begin{equation*}
\begin{split}
\pi&=\{\antrl a01{\{b_0\}}\}\concat
	\{\antrl b01{\{a_1\}},\antrl c01{\{a_1\}}\}\concat
	\{\antrl a10{\emptyset}\}\concat
	\{\antrl b10{\{a_0\}}\}
	\\&\qquad \concat
	\{ \antrl c12{\{b_0\}}\}\enspace.
\end{split}
\end{equation*}
The latest transition not in $\TRRed$ is $\antrl b10{\{a_0\}}$ at step $4$.
One can compute $\cb(\pi,4,b)=(2,4)$, and as there is no transition involving $b$ between steps $3$ and $4$,
$\redact = \{(b,2,4)\}$; therefore, the sequence
\begin{equation*}
\varpi=\{\antrl a01{\{b_0\}}\}\concat
	\{\antrl c01{\{a_1\}}\}\concat
	\{\antrl a10{\emptyset}\}\concat
	\{\}\concat
	\{ \antrl c12{\{b_0\}}\}
\end{equation*}
is a valid sub-trace of $\pi$ reaching $c_2$, proving $\pi$ non-minimality.
\end{example}

In conclusion, if $\trace$ is a minimal trace for $g_\top$ reachability from
state $\sinit$, then, $\tr(\trace)\subseteq\TRRed$.

\section{Experiments with partial reduction}
\suppllabel{partial}

The goal-oriented reduction relies on two intertwined analyses of the local causality in ANs:
(1) the computation of potentially involved objectives (\secref{gored-proc}) and 
(2) the filtering of objective that can be proven impossible (\secref{gored-filter}).
The second part can be considered optional: one could simply define the predicate
$\valid{\sinit}$ to be always true.
In order to appreciate the effect of this second part, we show here the intermediary results of
model reduction without the filtering of impossible objectives.
It is shown in table below, in the lines in \textit{italic}.
As we can see, for some models it has no effect on the reduction, for some others the filtering
parts is necessary to obtained important reduction of the state space (e.g., MAPK, TCell-r (94),
TCell-d).

\begin{scriptsize}

\begin{longtable}{|:l||^r|^r|^r|}
\hline
Model & \# tr & \# states & |unf| \\\hline
\multirow{3}{*}{EGF-r (20)}
& 68 & 4,200 & 1,749 \\
\rowstyle{\itshape} & 
	43 & 722 & 336\\
& \textbf{43} & \textbf{722} & \bfseries 336 \\\hline
\multirow{3}{*}{Wnt (32)}
	& 197 & 7,260,160 & KO\\
\rowstyle{\itshape} &
	134 &241,060 & 217,850\\
\rowstyle{\bf}
	& 117 & 241,060 & 217,850 \\\hline
\multirow{3}{*}{TCell-r (40)}
	& 90 & $\approx 1.2\cdot 10^{11}$ &KO\\
\rowstyle{\itshape}
	& 46 & 25,092 & 14,071 \\
\rowstyle{\bf}
	& 46 & 25,092 & 14,071 \\\hline
MAPK (53)
	& 173 & $\approx 3.8\cdot 10^{12}$ &KO\\
profile 1
\rowstyle{\itshape} &
	147 & $\approx 9\cdot 10^{10}$ & KO \\
 \rowstyle{\bf}
	& 113 & $\mathbf{\approx 4.5\cdot 10^{10}}$ & KO \\\hline
MAPK (53)
	& 173 & 8,126,465 & KO \\
profile 2 \rowstyle{\itshape}
	& 148 & 1,523,713 & KO \\
\rowstyle{\bf}
	& 69 & 269,825 & 155,327 \\\hline
\multirow{3}{*}{VPC (88)}
	& 332 & KO & KO \\
\rowstyle{\itshape} 
	& 278 & $\approx 2.9\cdot 10^{12}$ & 185,006\\
\rowstyle{\bf}
	& 219 & $\mathbf{1.8\cdot 10^9}$ & 43,302 \\\hline
\multirow{3}{*}{TCell-r (94)}
 &	217& KO  & KO \\
 \rowstyle{\itshape}
 	& 112 & KO & KO\\
& \textbf{42} & \textbf{54.921} & \bf 1,017 \\\hline
TCell-d (101) &
	384 & $\approx 2.7\cdot 10^8$ & 257 \\
profile 1 
\rowstyle{\itshape} &
	275 & $\approx 1.1\cdot 10^8$ & 159\\
\rowstyle{\bf} &
	0 & 1 & 1 \\\hline
TCell-d (101) &
	384 & KO & KO \\
profile 2
\rowstyle{\itshape}& 
	253 & $\approx 2.4\cdot 10^{12}$ & KO \\
\rowstyle{\bf} &
	161 & 75,947,684 & KO \\\hline
EGF-r (104) & 
	378 & 9,437,184 & 47,425 \\
profile 1
\rowstyle{\itshape} &
	120 & 12,288 & 1,711\\
	&
	\textbf{0} & \textbf{1} & \textbf{1} \\\hline
EGF-r (104) &
	378 & $\approx 2.7 \cdot 10^{16}$ & KO\\
profile 2
\rowstyle{\itshape} &
	124 & $\approx 2\cdot 10^9$ & KO\\
\rowstyle{\bf} & 
	69 & 62,914,560 & KO\\\hline
\multirow{2}{*}{RBE2F (370)} &
	742 & KO & KO \\
\rowstyle{\itshape}&
	56 & 2,350,494 & 28,856 \\
\rowstyle{\bf} &
	56 & 2,350,494 & 28,856 \\\hline
\end{longtable}
\end{scriptsize}

\end{document}